\newcommand{\dd}{\mathrm{d}}
\newcommand\numberthis{\addtocounter{equation}{1}\tag{\theequation}}
\newtheorem{proposition}{Proposition}
\newtheorem{assumption}{Assumption}
\title{The Node-wise Pseudo-marginal Method}
\author{
    Denishrouf Thesingarajah\\
	Department of Statistics\\
	University of Warwick\\
	Coventry, CV4 7AL\\
	\texttt{d.thesingarajah@warwick.ac.uk} \\
	\And
    Adam M.~Johansen\\
	Department of Statistics\\
	University of Warwick\\
	Coventry, CV4 7AL\\
	\texttt{a.m.johansen@warwick.ac.uk} \\
}
\begin{document}
\maketitle

\begin{abstract}
Motivated by problems from neuroimaging in which existing approaches make use of ``mass univariate'' analysis which neglects spatial structure entirely, but the full joint modelling of all quantities of interest is computationally infeasible, a novel method for incorporating spatial dependence within a (potentially large) family of model-selection problems is presented. Spatial dependence is encoded via a Markov random field model for which a variant of the pseudo-marginal Markov chain Monte Carlo algorithm is developed and extended by a further augmentation of the underlying state space. This approach allows the exploitation of existing unbiased marginal likelihood estimators used in settings in which spatial independence is normally assumed thereby facilitating the incorporation of spatial dependence using non-spatial estimates with minimal additional development effort.

The proposed algorithm can be realistically used for analysis of 
moderately sized data sets such as $2$D slices of whole $3$D dynamic PET brain images or other regions of interest.
Principled approximations of the proposed method, together with simple extensions based on the augmented spaces, are investigated and shown to provide similar results to the full pseudo-marginal method.
Such approximations and extensions allow the improved performance obtained by incorporating spatial dependence to be obtained at negligible additional cost.
An application to measured PET image data shows notable improvements in revealing underlying spatial structure when compared to current methods that assume spatial independence.\end{abstract}

\keywords{Bayesian inference \and Monte Carlo \and poistron emission tomography \and spatial modelling}

\section{Introduction}\label{sec1}

Incorporating spatial dependence in the process of analysing large image, and other spatial image--like, data sets can be a difficult problem largely due to the computational requirements.
An important example of such data is provided by PET (Positron Emission Tomography) imaging of the brain, where a whole image typically requires analysis of up to $10^{6}$ time series \citep{Hammers2007}.
Current state of the art analysis, e.g. \citet{Fan2021,Zhou2016}, of such data generally either assumes spatial independence of pixels or performs large-scale aggregation over space \citep{Zhou2002} to overcome computational restrictions. In this paper, a novel computational method to incorporate spatial dependence in the process of model selection is presented.

The technique presented here extends the well-known pseudo-marginal Markov Chain Monte Carlo (MCMC) algorithm originally presented by \cite{Beaumont2003}, and characterised by \cite{Andrieu2009}.
More specifically, a Potts model \citep{Potts1952} is first utilised as a prior distribution to encode spatial dependence over a \emph{local} model associated with each node of a graph.
Imposing a further assumption of conditional spatial independence for local model parameters given the local model then gives rise to significant simplifications in computations.

This Markov random field model is used on the discrete space of model orders to describe spatial dependence between neighbouring nodes (pixels).
Finally, a standard component-wise MCMC updating scheme is used in conjunction with the above assumptions.
This then allows for the use of node-level marginal likelihood estimators to be used within a pseudo-marginal MCMC method.
The end result is a tractable, accessible computational method that uses model selection at the individual (node/pixel) level to perform spatially dependent model selection on the whole (graph/image) data set.
This method provides a flexible but efficient algorithm that can be readily implemented ---
in fact, the proposed method allows for the use of existing unbiased non-spatial estimators to be used within a framework which incorporates spatial dependence.

In addition, the proposed method can be further extended through augmentation of the state space of the MCMC chain; in this specialised setting careful specification of the proposal distributions can lead to many further computational approaches and techniques.

The remainder of this paper is structured as follows.
Section~\ref{section:Background} presents relevant background material including the motivating problem  and reviews the use of Potts models in related contexts. Section~\ref{sec:existing-monte-carlo} reviews existing methods for sampling from distributions of this type.
Section~\ref{section:Method} presents a hierarchical model and develops methods for using it in the context of interest alongside a formal justification of the approach and a number of extensions. Finally, the performance of the developed method, and some variants thereof, is studied empirically in applied settings in Section~\ref{section:EmStudy}.

\section{ Background}
\label{section:Background}
\subsection{Preliminaries}
A graph $G=(V,E)$ is the pair of sets of nodes, or vertices, denoted $V$ and edges denoted $E$. 
In particular, each element of the edge-set $E$ is some pair of elements of the nodes-set $u,v \in V$, denoted $\langle u,v \rangle \in E$.
Two nodes $u,v$ are said to be connected by the edge $\langle {u,v} \rangle$ if $\langle u,v \rangle \in E$.
In this case we may say that $u$ and $v$ are neighbours, or adjacent, and denote this by the relation $u \sim v$. 
Here, we will look only at undirected graphs, so the relation $\sim$ will be symmetric and $\langle {u,v}  \rangle$ is an \emph{unordered} pair.
Denote by $\partial(v) = \{u \in V : v \sim u \}$  the set of neighbours of $v$, by convention $v$ is not a neighbour of itself. 

Given a graph, $G,$ a collection of random variables, $\bm{X} = (X_v : v \in V)$, indexed by the nodes-set, $V$, is called a random field on $G$.
Let $P$ be the law of $\bm{X}$; and define, for any $A \subset V$, $\bm{X}_A = \{X_v : v \in A \}$ and $ \bm{X}_{-A} = \{X_v : v \in V \setminus A \}$. With a slight abuse of this notation, we will write $\bm{X}_{-v}$ to denote $\bm{X}_{V \setminus \{v\}}$ for $v \in V$.
$\bm{X}$ is called a Markov random field \citep{Besag1974} on a discrete graph if and only if we have that:
$$P(X_{v} \vert \bm{X}_{-v}) = P(X_{v}\vert\bm{X}_{\partial(v)}),$$
where each component $X_v$ takes value in some set $\mathcal{X}$.
Thus, $\bm{X}$ takes values in $\mathcal{X}^{V}$, the collection of maps from $V$ to $\mathcal{X}$.
In this paper, $\mathcal{X}$ will always be finite.

Denote a parametric model for data $Y \in \mathcal{Y}$, by \( S =(\mathcal{Y},\{ P_\theta: \theta \in \Theta \})\); where $P_\theta$ is a probability distribution over $\mathcal{Y}$, with $\theta$ taking values in the  parameter space $\Theta \subset \mathbb{R}^d$ with $d < \infty$.
We will use $f(\cdot \vert \theta)$ to denote the density (w.r.t some dominating measure, usually the Lebesgue or counting measures) of the distribution $P_{\theta}$.
Allow $Y \sim P_\theta $ and, where the density exists, $ Y \sim f(\cdot \vert \theta),$ to mean that $Y$ is distributed according to the distribution $P_\theta$. 

A finite set of statistical models, called the model space, is denoted $\mathcal{S} = \{ S_M : M \in \mathcal{M} \}$. Here, each model is indexed by the model order $M$,  which is an element of a finite index set $\mathcal{M} \subset \mathbb{N}$.
For a model $S_M$, of order $M$, let $f_M(\cdot \vert \theta)$ for parameter $\theta \in \Theta_M$, be the associated density.
We write $\Theta_M$ to emphasis that the parametric space is often dependent on the model order.
A straightforward example of such models are the compartmental models, see \citep{Gunn2001} or Section \ref{PET} below.
Bayesian model selection, averaging and comparison involves inference of the model order $M \in \mathcal{M}$, where $M$ is a random variable to which some prior distribution is attached \citep[Chapter 7]{Robert2007}. 
An important component of the Bayesian model selection process is the \emph{posterior model probability}, denoted $\pi (M\vert Y)$.
Specifically, a central objective here is to characterise this distribution in the setting for image and image-like data, while accounting for spatial dependence.
The definition of the posterior distribution (particularly for the setting we are interested in) will be detailed in the sequel.

\subsection{Position Emission Tomography}
\label{PET}
An application that motivates the methodological development that follows is
Positron Emission Tomography (PET; \citet{Phelps2006}). 

Dynamic PET is an increasingly important neuroimaging technique used to study
different functions of the brain \emph{in vivo}. PET images have been used to investigate
many biochemical processes in the brain, in both control and atypical subjects.
This invasive imaging modality uses the detection of high-energy photons,
released as a consequence of an injected tracer's positron emissions, as a
signal to form a dynamical three-dimensional image.

Given that there will be greater radioactivity in regions where there are higher amounts of the tracer, the signal intensity in each voxel will be proportional to the concentration of the tracer. Thus, by taking a sequence of measurements over time, the constructed image represents the time course of the tissue concentration of the tracer \emph{in vivo}. 

As a result of the nature and number of observations recorded in the
reconstructed PET image, the data is usually modelled using a system of linear
differential equations such as compartmental models \citep{Gunn2001}. We restrict our attention here to the plasma input compartment models. Given data $y = (y_1, \dots,y_k)^{\top}$, at
a voxel, an $M-$compartment model can be formally written as:
\begin{align}
  y_j = C_{\mathrm{T}}(t_j;\phi_{1:M}, \vartheta_{1:M}) + \sqrt{\frac{C_{
  \mathrm{T}}(t_j;\phi_{1:M}, \vartheta_{1:M})}{t_j - t_{j-1}}} \epsilon_j, \label{eq:comp_signal} \\
  C_{\mathrm{T}}(t_j;\phi_{1:M}, \vartheta_{1:M}) = \sum_{i=1}^{M} \phi_i \int_{0}^{t_j}C_{\mathrm{P}}(s)e^{-\vartheta_i(t_j-s)}ds,\label{eq:comp_model} 
\end{align}
for $j = 1, \dots, k$; where $C_{\mathrm{T}}$ is the tissue concentration, $t_j$ is the measurement time of
observation $y_j$, $\epsilon_j$ is the additive measurement error and the
plasma input function $C_{\mathrm{P}}$ is treated as known.
The parameters \(\phi_{1:M} = (\phi_1, \dots, \phi_M)\) and \( \vartheta_{1:M} = ( \vartheta_1, \dots, \vartheta_M )\)
determine the dynamics of the model.
Importantly, these parameters are
indistinguishable, \citep[Theorem 2.2]{Gunn2001}, and are often called the micro-parameters. In contrast, the macro-parameter
$$V_{\mathrm{D}} = \sum_{i=1}^{M}\frac{\phi_i}{\vartheta_i},$$
termed the volume of distribution is uniquely identifiable. This macro-parameter
is very often the quantity of principal interest in statistical analysis of PET
data. 

The additive error variable $\epsilon_j$ is typically modelled using a normal distribution:
$$\epsilon_j \sim \mathcal{N}(0,\sigma^2),$$
where $\mathcal{N}(0,\sigma^2)$ is the normal distribution with mean zero and variance $\sigma^2$.
This class of statistical models have proven to be very effective in Bayesian analysis of PET
data, as studied and presented by \citet{Peng2008} and \citet{Jiang2009}.

One ubiquitous method for PET analysis is the non-negative non-linear least squares (NNLS; \cite{Cunningham1993}).
This approach is often used due to the non-negative nature of the rate constants.
The NNLS method involves minimal modelling assumptions, imposing no prior assumptions on the number of components.
Instead, the time-series at each voxel is interpreted as a noisy measurement of exponential decay.
The method involves the formulation of a constrained linear optimisation problem, where: $M$ will represent the maximum number of terms to be included, for example Cunningham and Jones (1993) use $M = 100$;
Each component of the $\vartheta$ parameter is fixed and predetermined to lie within some range that is physiologically meaningful.
Next, the optimal value of $\phi_j$ for each $\vartheta_j$ is estimated using a numerical method, subject to $\phi_i \geq 0, j=1, \dots, m$.

Recently, computer-intensive Monte Carlo methods have also been successfully
applied to these models towards meaningful statistical inference of PET data.
Specifically, \cite{Zhou2013} investigated an
application specific MCMC approach to Bayesian model comparison for this class
of models; and studied both vague non--informative and biologically informed priors. In
this work, samples from a MCMC chain were used in the generalized harmonic mean estimator to
approximate the model evidence in order to perform model comparison, selection and
averaging. This work also demonstrated the possibility that Monte Carlo approaches could be used
to investigate
and meaningfully compare more complex models, containing up to $M=3$ compartments, rather
than the typical up to $M=2$ compartmental models.
\cite{Zhou2013} further showed that within a Bayesian framework, a $t-$distributed error structure may be far more plausible. This can be written as
$$\epsilon_j \sim \mathcal{T}(0,\tau, \nu),$$
where $\mathcal{T}(0,\tau,\nu)$ denotes the Student's $t-$distribution with mean zero, scale $\tau$, and $\nu$ degrees of freedom.
Expressions for prior and posterior densities, of both cases, can be found in Appendix \ref{App:Model_Equations}.

Importantly, with regard to model selection within this context, \citet{Zhou2013} also showed that a Bayesian approach \emph{exhibited some spatial structure} despite assuming spatial independence.
This was in contrast to using the Akaike Information Criteria (AIC; \cite{Akaike1973}) for model selection with a non-linear least square (NLS) method (as well as the NNLS method) used to approximate the MLE which showed no obvious spatial structure, see also Section \ref{sec:data-analys}.

More recently, \cite{Zhou2016} used an adaptive variant of the sequential Monte Carlo (SMC)
sampler algorithm of \cite{DelMoral2006} to estimate the model evidence (marginal likelihood). The class of algorithms
presented in this work, use adaptive MCMC kernels together with adaptive
annealing schemes to minimise tuning requirements to further the accessibility of
Monte Carlo approaches to this problem. In addition, the model evidence can be
directly approximated using the set of weights from the sampler. The
computational method proposed in this present paper, builds upon this approach
and thus inherits many of these advantages.
Further, \citet{Zhou2016} presented empirical results suggesting that this SMC method produces similar performance to the MCMC method proposed by \citet{Zhou2013};
As such, in this paper, we focus on comparing performance to the SMC sampler method --- which we refer to as the (spatially) independent SMC method.

Similarly, \cite{Castellaro2017} propose a method based on a Variational Bayes (VB) approach for parameter estimation in PET data analysis; and also present an empirical comparison to the MCMC method of \cite{Zhou2013}.
Specifically, in this empirical comparison the $M=2-$compartment model was used to represent measured PET data.
The VB method was adapted to kinetic modelling of PET data, applied to the $M=2-$compartment model and used to estimate the $V_{\rm{D}}$.
Next, the MCMC method, restricted to inference from the $M=2-$compartment model, was also used to analyse this data.
Castellaro et al. concluded that there was a strong agreement between the outputs of these two methods. In particular the $V_{\rm{D}}$ maps of both the VB and MCMC methods showed a very high correlation (Pearson $\rm{R}^2 = 0.99$ (see, Figure S2, \citet{Castellaro2017})).

There continues to be further development in Bayesian approaches in analysis of PET data, albeit in almost all cases they are non-spatial approaches. 
The most recent example is \citet{Fan2021}, who propose a simple and intuitive algorithm, based on Approximate Bayesian Computation(ABC), for analysis of PET data.
As with the above Bayesian approaches, this method, termed PET-ABC, assumes voxels are spatially independent. PET-ABC works by simulating from the parameter space using a simple rejection scheme: Firstly, proposals are sampled from the prior distribution, typically this is the uniform density with a physiologically meaningful range.
Each proposal is used as a trial value to estimate the signal $C_{\text{T}}$.
Next, the error between a summary statistics of the estimated signal and the de-noised data $\bm{y}$ is computed.
In their study,  \citet{Fan2021} suggest using spline smoothed estimates of $C_\text{T}$ as the summary statistic for both the estimated and observed signal.
The proposed value of the parameter is accepted if the above error is below a predetermined threshold.
The generated parameter sample can then be use for point estimation and to quantify uncertainty.
Bayesian model selection follows the above method, with the additional step of proposing a model at each iteration.  
\cite{Fan2021} state that using PET-ABC produced estimates with lower variance when compared to the weighted NLS method. 

\subsection{Potts Model}
\label{sec:potts-model}

The Potts model \citep{Potts1952}, a generalisation of the Ising model \citep{Ising1925}, was used originally to model interacting spins on a lattice.
However, these models have been shown to be also very effective in analysis of image data; for a detailed discussion and review of such applications see \citet{Winkler1995}, \citet{Geman1990} and references therein.
For example, \citet{Geman1984} is an early study demonstrating the effectiveness of the Ising model for restoration of images under a Bayesian framework.
More recently, the Potts model has been used very successfully within the more broader but still growing sub-field of Bayesian image analysis \cite{Hurn2003}.

In the present context, the Potts model is an ideal prior for model orders, due to its discrete state  space, minimal parametrisation and ability to encode the general principle that nearby vertices are \emph{a priori} likely to be best described by the same model.
Following the Bayesian approach, in this study, the model order of the data will be Markov random field with a Potts prior distribution.

For a review of Potts distributions, particularly in the image analysis context, see \cite{Hurn2003}.

Given a graph $G = (V,E)$, finite state space $\mathcal{X}$ and coupling constant $J>0$, the Potts model specifies a family of parametric probability distributions on $\mathcal{X}^{V}$; characterised by the  joint mass function:
$$p(\bm{x} \vert J,G) = \frac{1}{\zeta(J)}\exp\left( J \sum_{v \sim u} \delta_{x_v, x_u}\right).$$
Here, recall $v \sim u$ denotes neighbouring pairs and $\delta_{x_v, x_u}$ is the Kronecker delta, i.e. $\delta_{x_v, x_u}$ is one if $x_v = x_u$, and zero otherwise.
The intractable normalising constant (or partition function), written
$$\zeta(J)  = \sum_{\bm{x'} \in \mathcal{X}^{V}} \exp\left( J \sum_{v \sim u} \delta_{x_v', x_u'} \right),$$
is a function of $J$. 

The parameter $J$ dictates how likely neighbouring random variable $X_v$ and $X_u$ are to have the same value.
Given that the normalisation depends upon this parameter, \(J\) is  typically very difficult to infer; and will be treated as known in this setting as discussed at the end of this section.

Given some graph $G = (V,E)$, and the associated random variable $\bm{X} = (X_v \in \mathcal{X}: v \in V)$, we write
$$\bm{X} \sim \text{Potts}(J, G, \mathcal{X})$$
to mean that the random variable $\bm{X}$, taking values in state space $\mathcal{X}^{V}$, is distributed according to the Potts model with given coupling constant $J$.
Where no ambiguity arises, the parameters will be omitted from notation in the interests of clarity.

The Potts model exhibits phase transition behaviour.
This has significant implications when using single-site update scheme MCMC methods, which could result in slow mixing.
That is, if the chain is in a configuration such as the case described above, changes to single variables will mean proposing to move to states of lower probability. 
Furthermore, phase transition behaviour happens close to or higher than critical values of the coupling constant, which we denote as $J_\text{critical}$.
In the Monte Carlo setting, there is a sharp transition from fast mixing below $J_{\text{critical}}$ and slow mixing above it.
\citet{Onsager1944} showed that for the Ising model on the two dimensional first order square lattice the exact value was
 $$J_\text{critical} = \log\left( 1+\sqrt{2} \right) \approx 0.881.$$
 More recently, see for instance \citet{Matveev1996} or \citet{Wu1982}, this has been extended: Let $D= \vert \mathcal{X} \vert $, then for a
$D-$state Potts model,
$$J_\text{critical} = \log\left( 1 + \sqrt{D} \right).$$

In applications, since the intractable normalising constant $\zeta(J)$ is dependent on $J$, it is difficult to infer in the model fitting process \citep{Everitt2012}.
Within the Monte Carlo context, these types of distributions are referred to as \emph{doubly-intractable}.
\cite{Moller2006} introduced an ingenious method to address such problems for the Ising model.
For a more recent work, and for a concise but detailed review of developments since \cite{Moller2006}, see \cite{Moores2020} and references therein. 

Additionally, $J$ is a parameter of a prior distribution, thus under the Bayesian analysis framework it would not be too unreasonable for it to be pre-specified.
As such, in what follows, we will treat $J$ as known.
For example, in the numerical studies below we use a value which gives good performance in all our pilot studies, with a preference for choosing a smaller value of $J$ to avoid imposing too much spatial structure via this prior distribution.

\section{Existing Monte Carlo Methods}
\label{sec:existing-monte-carlo}

Having established the problem of interest and reviewed the relevant distribution for encoding spatial information, we now turn our attention to a brief review of existing computational methods for investigating the characteristics of such complex distributions.
We begin by looking at the relatively simple Gibbs sampling method for the Potts model.
However, ultimately we would like to sample from, and subsequently characterise, a posterior distribution that consists of the Potts model as a prior, together with an intractable likelihood (detailed in Section \ref{section:Method}).
To this end, a common and standard approach to sampling
from densities expressed in terms of intractable integrals, known as the pseudo-marginal method, is also discussed.
In this paper, we use the SMC normalising constant estimator, to approximate said intractable integrals.
Thus, the section ends with a brief discussion on the SMC estimator and its properties.
However, generic pseudo-marginal methods would be computationally infeasible in the context of high-dimensional discrete latent spaces of the sort considered here; this then leads to the
proposal of a novel computational method in the subsequent section.

\subsection{ Gibbs Sampler for the Potts Model}\label{sec:gibbs-sampler-potts} 
There exist efficient samplers for the Potts model, most notably Glauber Dynamics \cite{Glauber1963} corresponding to single-site Gibbs updates, and the elegant Swendsen-Wang algorithm \citep{Swendsen1987}.
Typically, the Swendsen-Wang algorithm is used in simple Potts models, where there is an absence of a ``strong external filed'' i.e. the (graph) likelihood in this context.  In this regime, the Swendsen-Wang algorithm has appealing convergence properties for all values of the coupling strength, $J$, see \cite{Hurn2003}.
However, this excellent performance deteriorates markedly in the presence of an external field \citep{Higdon1998} and hence is less appealing in the presence of an informative likelihood. In settings in which the likelihood is relatively uninformative, there might be benefits in combining this type of cluster update with the local moves explored here.

We focus on the Gibbs sampling approach \citep{Glauber1963,Geman1984}, variants of which will extend naturally to the context of interest in this paper.
The full conditionals required for a Gibbs sampler are readily computed for the Potts model:
$$ p(x_v \vert \bm{x}_{-v}) = \frac{\exp\left( J \sum_{u \in \partial(v)} \delta_{x_u,x_v}
  \right) }{ \sum_{x' \in \mathcal{X}} \exp\left( J \sum_{u \in \partial(v)}
    \delta_{x_u, x'} \right)} .$$

As is typical in Gibbs samplers, the use of full conditionals results in a simple node--wise update schedule.

However, computing full conditionals for posterior densities such as that of interest here is not possible because the likelihood cannot be evaluated even up to a normalising constant.
So we turn to another Monte Carlo method to tackle this problem. 

\subsection{Pseudo-Marginal Monte Carlo Markov Chain}
\label{sec:PM monte carlo markov chain}
Consider, first, MCMC in the simpler case of a tractable target density; one which can be evaluated point-wise.

Denote the target probability density $\mu(x)$,
where $x \in \mathcal{X}$, and $\mathcal{X} \subseteq \mathbb{R}^d$ is the state
space. MCMC algorithms construct a $\mu-$invariant Markov chain denoted
$(X^{(i)})_{i=1}^{n}$ and like all time-homogeneous Markov chains are completely characterised by their initial distribution and their transition kernel.

The Metropolis-Hastings (MH) \citep{Metropolis1953, Hastings1970} algorithm is a
particularly widely-used MCMC algorithm that may be used to target any
analytically tractable density $\mu$. The algorithm can be described very
briefly as follows:

Given that the MH Markov chain is at some state $x\in\mathcal{X}$, a new
state $x^*\in\mathcal{X}$ is proposed from a proposal density, denoted here
$q(x,\cdot)$. The proposed state $x^*$ is accepted with
probability $ \min \{1,R(x,x^*)\}$, where the acceptance ratio is defined
$$ R(x,x^*): = \frac{\mu(x^*)q(x^*,x)}{\mu(x)q(x,x^*)}; $$
or remains in the current state otherwise. The MH algorithm specifies the kernel to
be of the form:
$$ K_{\text{MH}}(x,\dd x^*):= \min \{1,R(x,x^*)q(x,x^*)\}+\alpha_x\delta_x(\dd x^*),$$
where,
$$ \alpha_x := 1 - \int \min\{ 1,R(x,x^*) \} q(x,x^*) \dd x^*,$$
is the probability of remaining in the same state and $\delta_{x}$ denotes a probability measure concentrated at $x$.

The requirement of $\mu$ being known point--wise arises when computing $R$. Thus, a very natural approach for intractable target
densities is to approximate $\mu$ instead.
\citet{Beaumont2003} and \cite{Andrieu2009} show that as long as the
approximation of $\mu$ is unbiased, the Markov chain of this approximate version will have the same invariant distribution as when the ``exact'' target distribution had been used. This approximation results in a class of techniques called ``pseudo-marginal'' algorithms.

Many intractable target densities, including those of interest here, can be expressed in terms of marginals of tractable densities: integrals that cannot be evaluated analytically. An obvious example is when we wish to use marginal likelihoods as will be the case in our context; see~\eqref{eq:PosteriorDensity} below.

More formally, let the target density be of the form
$$ \mu(x) = \int_\Theta \mu(x,\theta) \dd\theta,$$
where the integrals cannot be solved analytically; and 
$\theta \in \Theta$ may be considered a latent (or nuisance) variable that is not necessarily of
current interest.

Let $\widehat{\mu}(x)$ be an estimator of $\mu(x)$ and suppose that $\widehat{\mu}(x)$
is unbiased, for all $x \in \mathcal{X}$.
That is, if we let $g(\cdot \vert x)$ denote the density of $\widehat{\mu}(x)$, then we have that $\mathbb{E}[\widehat{\mu}(x)] = \mu(x)$ for every $x\in\mathcal{X}$,
where the expectation is taken with respect to $g(\cdot \vert x)$.
Note that $g$ need not be known.

The estimator $\widehat{\mu}(x)$ is random, and its variance will play an important part in producing accurate results --- we postpone further discussion until Section~\ref{section:NodewiseAlgoSection}.
The normalising constant estimator of the sequential Monte Carlo sampler, as detailed in Section \ref{sec:SMC Samplers}, is a common and popular marginal likelihood estimator. 

Grouped independence Metropolis Hastings (GIMH), a pseudo-marginal algorithm
first presented by \cite{Beaumont2003} and subsequently interpreted by \citet{Andrieu2009}, can now be presented. Using the simpler reformulation (without auxiliary variables) given by \citet{Andrieu2015}, the pseudo-code description of GIMH is presented in Algorithm \ref{algo:GIMHAlgorithm}.

\begin{algorithm}
\caption{ The GIMH Pseudo-Marginal Algorithm}\label{algo:GIMHAlgorithm}
\begin{enumerate}
\item Given $x^{(i-1)}$ and unbiased estimate
  $\widehat{\mu}_{(i-1)}(x^{(i-1)})$.
\item Sample:
\begin{enumerate}
\item $x^{*} \sim q(x^{(i-1)}, \cdot)$
\item $\widehat{\mu}_*(x^{*}) \vert x^* \sim g(\cdot \vert x^{*} )$
\end{enumerate}
\item Compute \[ \widehat{r}(x^{(i-1)},x^{*}):=
    \frac{\widehat{\mu}_{*}(x^*)q(x^*,x^{(i-1)})}{\widehat{\mu}_{(i-1)}(x^{(i-1)})q(x^{(i-1)},x^*)} \]
\item With probability $\min\{1,\hat{r}\}$ let:
\[ x^{(i)} = x^{*} \text{ and } \widehat{\mu}_{(i)}(x^{(i)}) = \widehat{\mu}_{*}(x^{*}); \hspace{1cm} \]
otherwise
\[ x^{(i)} = x^{(i-1)} \text{ and } \widehat{\mu}_{(i)}(x^{(i)}) = \widehat{\mu}_{(i-1)}(x^{(i-1)}).\hspace{1cm} \]
\end{enumerate}
\end{algorithm}
Here, we have used subscripts $(i)$ and $*$ for $\widehat{\mu}$ to emphasise precisely where the marginal likelihood has been (re-)estimated.
Henceforth these subscripts will be suppressed for notational clarity.

Note in particular, Algorithm \ref{algo:GIMHAlgorithm} is analogous to the marginal MH algorithm,
with the difference being that the acceptance ratio, \(R\), is now approximated by
$$ \widehat{R}(x,x^*): =
  \frac{\widehat{\mu}(x^*)q(x^*,x)}{\widehat{\mu}(x)q(x,x^*)}.$$
  As \citet{Andrieu2009} point out, the pseudo-marginal Markov chain should be intuitively thought of as a Markov chain on the extended space of the random variable $\left(X, \widehat{\mu}(X)\right)$, rather than just $X$ itself.
They further show that a Markov kernel with this acceptance ratio produces a Markov chain with (and converging to) invariant distribution with marginal  $\mu$.
A similar justification applies to the algorithm proposed below which is a natural extension of GIMH specialised to the setting of interest.
We detail this in Section \ref{sec:exact-samples-from}, Proposition \ref{prop_1}.  

\subsection{ Estimating the marginal likelihood using an SMC Sampler}
\label{sec:SMC Samplers}
Unbiased likelihood estimators are commonly constructed using particle filters, or SMC samplers, as in \citet{Andrieu2010}.

SMC samplers \citep{ DelMoral2006} are a class of algorithms that generate collections of weighted samples approximating
each of a sequence of target distributions $\{ \mu_t\}_{t\geq 1}$  over  essentially any random variable defined on some spaces
$(\mathcal{X}_t)_{t\geq 1}$.
Standard SMC methods achieve this through a combination of
sequential IS and variance reduction resampling methods on spaces with increasing
dimensions (see, for example, \citet{Doucet2011}). However, SMC
samplers allow for sequence of distributions defined on some common space
$\mathcal{X}$.

SMC samplers accomplish this by creating a sequence of auxiliary
distributions
on spaces of increasing dimensions.
SMC samplers can also be readily applied to simpler spaces through the use of annealing schemes.
See \cite{Zhou2016} and Section \ref{section:EmStudy} for details on specification of the annealing scheme within the present Bayesian analysis context. 
The sequence of auxiliary distributions are specified
using Markov kernels called ``backward kernels'', which critically  impact the variance of the estimators. See \cite{DelMoral2006} for more details on these so-called backward kernels.

Conventional sequential importance re-sampling algorithms can then be applied to
the auxiliary distribution sequence. More precisely, assume that at iteration $t-1$ the set of
$N$ weighted particles approximate $\widetilde{\mu}_{t-1}$, denote this
$\{W_{t-1}^{(i)},X_{0:t-1}^{(i)}\}_{i=1}^{N}$.
Here, $W_{t}^{(i)}$ denotes the \emph{normalised} weight of particle $X_{0:t}^{(i)}$ for all $i = 1, \dots, N$.
For iteration $t$, the path of
each particle $X_{0:t-1}^{(i)}$ is extended with a Markov kernel say,
$K_t(x_{t-1},x_t)$, yielding the set of particles $\{X_{0:n}^{(i)}\}_{i=1}^{N}$
to which IS is applied. Subsequently, the weights are updated by a factor
called incremental weights.

Importantly, particularly when interested in approximating marginals such as the
model evidence (marginal likelihood); if $\mu_t(x_t) = {\gamma_t(x_t)}/{Z_t}$ is only known up to
a normalising constant, the unnormalised incremental weights
$$w_t(x_{t-1},x_{t}) = \frac{\gamma_{t}(x_{t-1})}{\gamma_{t-1}(x_{t-1})}.$$
can be used.
Note: here, and as practised in the studies below, we assume that $K_t$ is constructed to be $\mu_t$-invariant,
$\mu_t \ll \mu_{t-1}$ and the associated time-reversal kernel is used for the backward kernel.

In fact, this collection of algorithms provide unbiased (see \citet[Proposition 7.4.1]{DelMoral2004}) estimates of ${Z_t}/{Z_{t-1}}$ via
\begin{align}
  \widehat{\frac{Z_t}{Z_1}} = \prod_{p=2}^{t}\widehat{\frac{Z_p}{Z_{p-1}}} = \prod_{p=2}^{t}\sum_{i=1}^{N}W^{(i)}_{p-1}w(X^{(i)}_{p-1:p}).\label{eq:smc_norm_const} 
  \end{align}

  These normalising constant estimators satisfy a Central Limit Theorem \citep[Proposition 9.4.1]{DelMoral2004} and have asymptotic variance that are inversely proportional to the number of particles $N$ (see also \cite{Chopin2020} for a recent review).
  As such, $N$ is an important tuning parameter when this sampler is used within a pseudo-marginal algorithm \citep{Doucet2015}.

\section{Methodology}\label{section:Method}

We now turn our attention to the development and presentation of the models to encode spatial dependence together with computational methods that can be used to perform inference under, and characterise, these models.
 
There is a considerable literature on spatial modelling in Bayesian data analysis (see, for example, \citet{Gelfand2010}).
However, some specialised settings, such as that encountered when dealing with PET images, are yet to be explored to the same extent.
In these cases, there may be significantly large amounts of data to be analysed and relatively sophisticated models may be required for meaningful investigation.
Such approaches may need to incorporate application-specific structure and information. For example, \cite{Bezener2018} presents a method of variable selection which models spatial dependence using hierarchical spatial priors and parcellation for the analysis of MRI images.

Studies that attempt to model spatial relationships in PET images are scarce, and tend not to use a Bayesian approach.
One such example is \citet{Zhou2002} who proposes a method to allow for spatial dependence.
That is, spatial constraint using nonlinear ridge regression is used to improve upon parametric images produced by conventional weighted NLS methods.
They showed that doing so reduced the percentage mean square error by $60-80\%$ in simulation studies.
These promising results motivate the incorporation of spatial dependence when analysing PET images, particularly in Bayesian frameworks where such studies have been largely absent.

On the other hand, there exists many computationally efficient methods for complex settings, such as PET data analysis. These include, in the PET context, among many others: \citet{Gunn2002,Peng2008,Jiang2009, Zhou2013, Zhou2016}. Computational tractability in these methods is typically attained through the assumption of spatial independence.

A natural strategy is, therefore, to adapt existing non-spatial methods to a model that does incorporate spatial dependence. This strategy is particularly natural when modelling spatial dependence via a Potts model as it can be efficiently targeted by MCMC methods with single-site update schemes. In this section we construct a class of models that allows for such computational methods.

Having presented the model that allows for effective incorporation of spatial dependence, we then present a natural extension of the generic pseudo-marginal approach, discussed in Section \ref{sec:existing-monte-carlo} above, to this specialised setting in Section~\ref{section:NodewiseAlgoSection} along with the theoretical justification of the method, which follows from that of the standard pseudo-marginal approach and some additional considerations. Further extensions of the method are described in Section~\ref{sec:Mult-augM-pseudo}, and a straightforward approximation of the proposed algorithm is introduced in Section~\ref{sec:Approximations}.

\subsection{ Spatial Bayesian Model Selection using a Potts Prior}

Given spatial data (or spatio-temporal data, such as PET) $\bm{Y} = (Y_v : Y_v \in \mathbb{R}^k, v \in V )$, associate a graph $G_{\bm{Y}}=(V_{\bm{Y}},E_{\bm{Y}})$ that encodes the geometry of spatial dependence. 
That is, data points $Y_u$ is conditionally independent of $Y_v$ given $Y_{-\{u,v\}}$ unless $\langle u,v \rangle \in E_{\bm{Y}}$.
For example, for image data a simple (finite) lattice would suffice. We will call $Y_v$ the node data point at the node $v \in V$.

Building good statistical models for $\bm{Y}$ can be difficult; model selection would involve selecting among many complex and computationally intensive models from the model space.
Analysis in this setting can be simplified by using existing simpler models such as those that assume spatial independence and model data at the node-level.

Such an approach has two significant strengths.
Firstly, this approach can readily exploit many existing techniques which  make spatial-independence assumptions, allowing them to incorporate spatial dependence. Secondly, using this approach allows for considerable computational simplifications as will be seen later.

\subsubsection{A Generic Model for Spatial Model Selection}
\label{ssection:S-T Inference using Potts model}

Formally, consider parametric models for the node data point $Y_v$, for all $v \in V_{\bm{Y}}$, rather than the whole data set $\bm{Y}$.
Denote the parameter of these models by $\theta_v$, with parametric space $\Theta_v$ for each node $v \in V$. 

It is immediate that a Potts model can be used to encode spatial relationships between models at different locations in space; define 
$\bm{X} =( X_v \in \mathcal{X} : v \in V_{\bm{Y}})$ a Markov random field with respect to the graph $G_{\bm{Y}}$.
Here, the coupling constant $J$ is treated as known and constant.

The parametric distribution for $Y_v$, now has parameters $(\theta_v, X_v)$.
Here, $X_v$ is a component of the Potts random variable $\bm{X}$, and so has spatial dependence dictated by $E_{\bm{Y}}$.
The model order at node $v$ is given by $X_v$. Given a family of parameterised prior densities for each possible model denoted generically  $p$, over the parameter space; we may compactly summarise this hierarchical model as:
\begin{align}
  \bm{X} \vert J, G_{\bm{Y}}&\sim \text{Potts}(J, G_{\bm{Y}}), \label{eq:XModelPotts}\\
  \theta_{v} \vert X_{v}, \xi_v &\sim p_{X_{v}}(\cdot \vert \xi_v) \text{ for all }v \in V_{\bm{Y}},\label{eq:XModelPrior}\\
  Y_{v} \vert \theta_{v},X_{v} &\sim f_{X_{v}}(\cdot \vert \theta_v) \text{ for all }v \in V_{\bm{Y}}. \label{eq:XModelLikelihood}
\end{align}

Here, recall that $f_M$ denotes the density associated with statistical model $S_M$ with model order $M$.
A similar convention is used for the (parameter) prior to denote that a different prior distribution must be used for each model order in the Bayesian context.
Specifically, here we have that the model order is $M=X_v$ for each node $v \in V$.
This generic model is represented as a plate diagram in Figure \ref{fig:platesDiagram2}.

\begin{figure}
  \centering
  \tikzset{every picture/.style={line width=0.75pt}} 

\begin{tikzpicture}[x=0.75pt,y=0.75pt,yscale=-1,xscale=1]

\draw   (260,60) -- (430,60) -- (430,270) -- (260,270) -- cycle ;
\draw    (310,40) -- (310,78) ;
\draw [shift={(310,80)}, rotate = 270] [color={rgb, 255:red, 0; green, 0; blue, 0 }  ][line width=0.75]    (10.93,-3.29) .. controls (6.95,-1.4) and (3.31,-0.3) .. (0,0) .. controls (3.31,0.3) and (6.95,1.4) .. (10.93,3.29)   ;
\draw   (290,100) .. controls (290,88.95) and (298.95,80) .. (310,80) .. controls (321.05,80) and (330,88.95) .. (330,100) .. controls (330,111.05) and (321.05,120) .. (310,120) .. controls (298.95,120) and (290,111.05) .. (290,100) -- cycle ;
\draw   (290,180) .. controls (290,168.95) and (298.95,160) .. (310,160) .. controls (321.05,160) and (330,168.95) .. (330,180) .. controls (330,191.05) and (321.05,200) .. (310,200) .. controls (298.95,200) and (290,191.05) .. (290,180) -- cycle ;
\draw   (290,240) .. controls (290,228.95) and (298.95,220) .. (310,220) .. controls (321.05,220) and (330,228.95) .. (330,240) .. controls (330,251.05) and (321.05,260) .. (310,260) .. controls (298.95,260) and (290,251.05) .. (290,240) -- cycle ;
\draw   (350,179) .. controls (350,167.95) and (358.95,159) .. (370,159) .. controls (381.05,159) and (390,167.95) .. (390,179) .. controls (390,190.05) and (381.05,199) .. (370,199) .. controls (358.95,199) and (350,190.05) .. (350,179) -- cycle ;
\draw    (310,120) -- (310,158) ;
\draw [shift={(310,160)}, rotate = 270] [color={rgb, 255:red, 0; green, 0; blue, 0 }  ][line width=0.75]    (10.93,-3.29) .. controls (6.95,-1.4) and (3.31,-0.3) .. (0,0) .. controls (3.31,0.3) and (6.95,1.4) .. (10.93,3.29)   ;
\draw    (310,200) -- (310,218) ;
\draw [shift={(310,220)}, rotate = 270] [color={rgb, 255:red, 0; green, 0; blue, 0 }  ][line width=0.75]    (10.93,-3.29) .. controls (6.95,-1.4) and (3.31,-0.3) .. (0,0) .. controls (3.31,0.3) and (6.95,1.4) .. (10.93,3.29)   ;
\draw    (290,100) .. controls (290.61,94.89) and (256.96,170.11) .. (289.5,238.96) ;
\draw [shift={(290,240)}, rotate = 244.22] [color={rgb, 255:red, 0; green, 0; blue, 0 }  ][line width=0.75]    (10.93,-3.29) .. controls (6.95,-1.4) and (3.31,-0.3) .. (0,0) .. controls (3.31,0.3) and (6.95,1.4) .. (10.93,3.29)   ;
\draw  [dash pattern={on 4.5pt off 4.5pt}] (349,87.4) .. controls (349,80) and (355,74) .. (362.4,74) -- (412.6,74) .. controls (420,74) and (426,80) .. (426,87.4) -- (426,127.6) .. controls (426,135) and (420,141) .. (412.6,141) -- (362.4,141) .. controls (355,141) and (349,135) .. (349,127.6) -- cycle ;
\draw   (354,99) .. controls (354,87.95) and (362.95,79) .. (374,79) .. controls (385.05,79) and (394,87.95) .. (394,99) .. controls (394,110.05) and (385.05,119) .. (374,119) .. controls (362.95,119) and (354,110.05) .. (354,99) -- cycle ;
\draw    (353.58,100.08) -- (332,99.09) ;
\draw [shift={(330,99)}, rotate = 362.63] [color={rgb, 255:red, 0; green, 0; blue, 0 }  ][line width=0.75]    (10.93,-3.29) .. controls (6.95,-1.4) and (3.31,-0.3) .. (0,0) .. controls (3.31,0.3) and (6.95,1.4) .. (10.93,3.29)   ;
\draw    (350,180) -- (332,180) ;
\draw [shift={(330,180)}, rotate = 360] [color={rgb, 255:red, 0; green, 0; blue, 0 }  ][line width=0.75]    (10.93,-3.29) .. controls (6.95,-1.4) and (3.31,-0.3) .. (0,0) .. controls (3.31,0.3) and (6.95,1.4) .. (10.93,3.29)   ;

\draw (303,21) node [anchor=north west][inner sep=0.75pt]  [font=\large] [align=left] {$\displaystyle J$};
\draw (301,89) node [anchor=north west][inner sep=0.75pt]  [font=\large] [align=left] {$\displaystyle X_{v}$};
\draw (301,172) node [anchor=north west][inner sep=0.75pt]  [font=\large] [align=left] {$\displaystyle \theta _{v}$};
\draw (301,229) node [anchor=north west][inner sep=0.75pt]  [font=\large] [align=left] {$\displaystyle Y_{v}$};
\draw (371,244) node [anchor=north west][inner sep=0.75pt]  [font=\large] [align=left] {$\displaystyle v\ \in V$};
\draw (361,171) node [anchor=north west][inner sep=0.75pt]  [font=\large] [align=left] {$\displaystyle \xi _{v}$};
\draw (367,120) node [anchor=north west][inner sep=0.75pt]  [font=\small] [align=left] {$\displaystyle {\displaystyle u\ \in \partial ( v)}$};
\draw (365,88) node [anchor=north west][inner sep=0.75pt]  [font=\large] [align=left] {$\displaystyle X_{u}$};

\end{tikzpicture}
  \caption{Plate diagram of the proposed generic model.}\label{fig:platesDiagram2}
\end{figure}
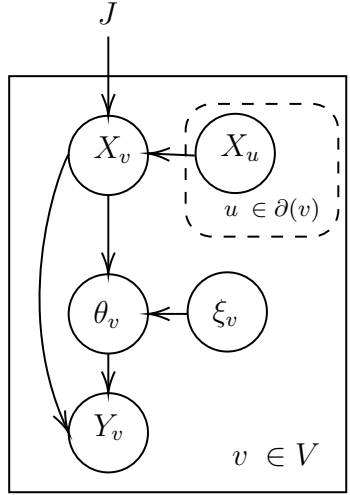

Here, $\xi_v$ is a parameter of the distribution $p$. In principle a hyperprior could be attached to this parameter with no particular difficulty; we work with this simple setting in the interest of parsimony.
To this end, we fix $\xi_v$ to be a known hyperparameter common to all nodes, $v$, and so we write $\xi$ instead of $\xi_v$ in what follows.

Additionally, given that we now treat the model order as a random variable, henceforth we use the convention that $f(\cdot \vert \theta_v,M) = f_M(\cdot \vert \theta_v)$.
Note that, at the nodal (pixel) level, Bayesian model selection would often involve the computation of the marginal likelihood, $f(y_v \vert M) = \int_{\Theta_v} f(y_v \vert \theta_v,M)p(\theta_v\vert M \xi)\mathrm{d} \theta_v$.
We are interested in the marginal likelihood of the whole image $\bm{Y}$, which may involve multiple intractable high-dimensional integrals.
We impose an assumption of conditional independence, discussed next, which allows us to make significant simplifications.
For a simple example, see the toy model described in Section \ref{section:NormalDistrToyExample};
This toy model will be used for simulation experiments to evaluate the proposed method.

In realistic settings, it may be that the parameter $\bm{\theta} = (\theta_v)_{v \in V}$ also exhibits spatial dependence.
The above hierarchical model, makes the underlying assumption that it does not and this setting is the focus of the present paper; where we have found the incorporation of spatial structure at the level of model sufficient to improve upon the state of the art for problems of interest. Further generalisation would be possible and provides an interesting direction for further work.
We encode this in Assumption~\ref{ass:cip} which will be a standing assumption throughout this paper and is discussed in Section~\ref{sec:dcip}.

\begin{assumption}[Conditional Independence]\label{ass:cip}
  $$\theta_v \vert X_v,\xi \perp \bm{X}_{-v}, \bm\theta_{-v} \textrm{ for all }v \in V.$$
\end{assumption}

The primary aim of the proposed methodology is to infer $\bm{X}$.
Although $\bm{\theta}$ \emph{is} inferred as a by-product of the proposed method,
it can be thought of as a latent variable in this framework and may be a nuisance parameter in some settings which mitigates the impact of modelling the parameters in this way. 

Finally, note that, as a direct consequence of the above model and assumption, the marginal likelihood of $\bm{Y}$ can be written as the product of the marginal likelihoods of $Y_v$ over all $v \in V$ --- see Section~\ref{sec:infer-froM-prop-1} below.

\subsubsection{ Spatial Bayesian Model Selection}
\label{sssection:S-T model selection with Potts}

To exploit the presented model, denote by $\mathcal{S}_v = \{S_{v,M_v} : M_v \in \mathcal{M}\}$ the model space used at each nodal data point $Y_v$. I.e. every node in the graph is associated with a model $M_v$ from a set common to all nodes, $\mathcal{M}$.
Thus a statistical model is associated with each node in the graph and hence each data point,
$$S_{v,M_v} = \{f(\cdot \vert {\theta_v},M_v): \theta_v \in \Theta_{v,M_v} \}.$$

The notation $\Theta_{v,M_v}$ is used here to emphasis that since $M_v$ is the model order at node $v$, the parameter space will be dependent on it.
For example, for compartmental models, the model order dictates the dimension of the parameter space.
These collections of models can be used to generate a model space for the whole data set $\bm{Y}$. Specifically, a set of candidate models for $\bm{Y}$ can be formulated as
$$\mathcal{S}_{\bm{Y}} = \{(S_{v,M_v}) : M_v \in \mathcal{M}\}_{v \in V}.$$

Writing $\bm{M} = (M_v)_{v \in V} \in \mathcal{M}^{V}$, Bayesian model selection of spatial data $\bm{Y}$ can be thought of as inference of the model order parameter $\bm{M}$ in the space $\mathcal{M}^{V}$.
Each realisation of $\bm{M}$ is called a configuration, note that there are $ \vert \mathcal{M} \vert ^{ \vert V \vert }$ candidate models for $\bm{Y}$.

As mentioned before, the Potts model is a natural choice for the prior distribution over model order. For model selection, we adapt the hierarchical model above such that:

\begin{enumerate}
\item  $\bm{M} = (M_v)_{v \in V} $ is a Markov random field with a Potts distribution, with spatial dependence represented by $E_{\bm{Y}}$ as before;
\item  Each $\theta_v$, for all $v \in V$, is dependent on $M_v$ (e.g. model order determines parameter dimension in some cases, accordingly we denote the prior $p(\cdot \vert \xi,M_v)$) and known constant hyper-parameter $\xi$, importantly  it is spatially independent (given the model order $M_v$);
\item  $Y_v$, the observed data, has likelihood $f(\cdot \vert \theta_v,M_v)$, where the model $M_v$ dictates which model is selected, for all $v \in V$. 
\end{enumerate}

Formally, the above can be summarised:
\begin{align}
 \bm{M} && \sim& \text{Potts}(J, G_{\bm{Y}}) && \label{eq:PottsPrior} \\
\theta_v& \vert M_v,\xi & \sim& p(\cdot;M_v, \xi) \text{ for all }v \in V, &&\label{eq:ParametersPrior}\\
Y_{v}& \vert \theta_v, M_v &\sim& f(\cdot; \theta_v,M_{v}) \text{ for all } v \in V. &&\label{eq:YLikelihood}
\end{align}

\subsubsection{Discussion of Assumption~\ref{ass:cip}}\label{sec:dcip}
Assumption~\ref{ass:cip} may not hold in general: if there is spatial dependence between the generative model which describes each vertex then there may also be dependence between the parameters of those models.
In some settings it can be viewed as an approximation which may be tolerable in order to allow inference under a model which is, at least, better than the assumption of full spatial independence.
In other settings, this assumption may not be unrealistic. It is also noteworthy that, if the model order dictates the dimensions of the parameter space at that node, and the space contains dependent parameters, it may be difficult to incorporate spatial dependence at a the parameter level in a meaningful manner.

For example, if considering compartmental models: if two adjacent pixels were to contain different number of compartments, it may be difficult to say anything meaningful about the spatial dependence of the micro-parameters such as the transfer rates.
Such considerations make it difficult to impose some more general Markov random field over the parameters.

Furthermore, incorporating spatial dependence at a model order level will typically take some precedence over spatial dependence at the parameter level.
For instance, in the PET setting, generally we would like to know if two adjacent pixels have the same compartments (i.e. whether or not the localised area contains receptors) before we think about the transfer rates.

To summarise the problem of interest in general terms: the image $\bm{Y}$ is modelled node-wise using the density $f(\cdot;\theta_v, M_v)$, $v \in V_{\bm{Y}}$, with priors over the parameters $\bm{\theta}$ and $\bm{M}$.
In particular, we propose the use of the Potts model as a prior over discrete parameters $\bm{M}$ to allow for tractable incorporation of spatial dependence in images.
As mentioned above, $\bm{\theta}$ will be treated as a latent variable --- this is discussed in detail next. 

\subsubsection{Inference from the Proposed Model}
\label{sec:infer-froM-prop-1}

Having specified this class of models, we now turn to the task of inference.
Let $p(\bm{M})$ denote the Potts prior probability mass function over $\bm{M} \in \mathcal{M}^{V}$.
Then, the model posterior density, for $\bm{Y}$, is denoted
$$\pi(\bm{M} \vert \bm{y}) \propto f(\bm{y} \vert \bm{M})p(\bm{M}).$$

Here, $f(\bm{y} \vert \bm{M})$ will be called the \emph{graph} marginal likelihood of $\bm{y}$; it encodes a generative description of the data under the proposed model.

The proposed method allows us to incorporate spatial dependence at the level of the discrete parameter $\bm{M}$ only.
Subsequently, for computational tractability we impose Assumption~\ref{ass:cip}: given the model order $M_v$ the random variable $Y_v$ at each node $v\in V$ is independent.
In other words, as per \eqref{eq:XModelPrior} and \eqref{eq:XModelLikelihood}, we have that
$$ f(\bm{y} \vert \bm{M}) = \prod_{v \in V} f(y_v \vert M_v).$$
Here $f(y_v \vert M_v)$ denotes the likelihood at each node $v$, and will be henceforth called the \emph{node-wise} likelihood.
This assumption is important as it makes the graph likelihood tractable, and provides computational simplifications discussed later.
Typically, as described above, a parametric model is used, thus $f(y_v \vert M_v)$ will in fact be a marginal likelihood or evidence (i.e. an integral, which is likely to be analytically intractable).

We finally have the probability mass function of primary computational interest
\begin{align*}
\pi(\bm{M} \vert \bm{y}) \propto& \left\{ \prod_{v \in V} f(y_{v} \vert M_{v})\right\}p(\bm{M})\\
 \propto&\left\{ \prod_{v \in V} \int_{\Theta_v} f(y_{v} \vert \theta_v,M_v)p(\theta_v \vert M_v, \xi)\dd\theta_v \right\}\\&\times \exp\left(J \sum_{v\sim u} \delta_{M_v,M_u}\right)\numberthis \label{eq:PosteriorDensity}.
\end{align*}

Recall the mild assumption that hyper-parameter $\xi$ is known and henceforth suppressed in notation.

Even in this general formulation, this mass function is not particularly attractive.
There are some positives:
It is straightforward to evaluate up to the intractable normalising constant; and Assumption~\ref{ass:cip} does seemingly lead to a simpler computation even at this stage, as only the node-wise marginal likelihoods $f(y_v \vert M_v)$ need to be computed rather than the high-dimensional integration of the full marginal likelihood $f(\bm{y} \vert \bm{M})$.

However, even in this simpler setting, the graph marginal likelihood poses two computational difficulties: i) it is a product of $ \vert V \vert $ integrals and ii) these integrals in general are analytically intractable in most cases of interest.

\subsection{The Node-Wise Pseudo-Marginal Algorithm}\label{section:NodewiseAlgoSection}

We now turn to extending the techniques discussed in Section \ref{sec:existing-monte-carlo}, to build the novel computational method for use in the class of models constructed above.

The aim is to characterise the posterior density, $\pi(\bm{M} \vert \bm{y})$, of the model given in \eqref{eq:PottsPrior}-\eqref{eq:YLikelihood}.
Suppose, for now, that a generic MH algorithm could be used in this setting.
That is, to generate a MH Markov chain that targets the posterior density.
This then gives the acceptance ratio: $$R(\bm{M},\bm{M}^*) = \frac{\pi(\bm{M}^* \vert \bm{y})q(\bm{M}^*,\bm{M})}{\pi(\bm{M} \vert \bm{y})q(\bm{M},\bm{M}^*)}. $$

Even with a tractable marginal likelihood, computing this acceptance ratio would be costly. Under Assumption~\ref{ass:cip} (given $\bm{M}$ the parameters $\bm{\theta} = (\theta_v)_{v \in V}$, of the model over the data $\bm{y}$, are independent) the computational complexity of the problem is reduced considerably; now, only the node-wise marginal likelihoods need to be computed.
However, every time a new configuration is proposed, up to $ \vert V \vert $ new integrals would then need to be computed and the acceptance probability is likely to be very small. Given that the number of possible Potts configurations is $ \vert \mathcal{M} \vert ^{ \vert V \vert }$, growing geometrically with the number of nodes (i.e. pixels, for image data), the computational costs are not appealing.

Under Assumption~\ref{ass:cip} it is natural to employ a Metropolis--within--Gibbs approach to mitigate these difficulties and to obtain significant computational simplifications.
Suppose that the MCMC sampler is currently at some configuration $\bm{M}=(M_v)_{v \in V}$.
Under a single variable updating schedule: \emph{only} the model order $M_v$, at some node $v\in V$, will be proposed to be changed in a given step.
We will refer to this as node-wise updating schedule, since we update the model order at a single node $v$ only.
Denote the proposed state to be $M^*_v$; and $\bm{M}^*_{[v]}$ to be the vector equivalent to $\bm{M}$ but with $M^*_v$ as the $v$-th coordinate.
For simplicity, assume that the proposal distribution is symmetric, and the acceptance ratio is
\begin{align*}
  \frac{\pi(\bm{M}_{[v]}^* \vert \bm{y})}{\pi(\bm{M} \vert \bm{y})}  =& \frac{f(\bm{y} \vert \bm{M}_{[v]}^*)p(\bm{M}_{[v]}^*)}{ f(\bm{y} \vert \bm{M})p(\bm{M})}\\
                                                         =& \frac{f(y_v \vert M^*_v)p(\bm{M}^*_{[v]})}{f(y_v \vert M_v)p(\bm{M})}\\
                                                         =&\frac{\int f(y_v \vert \theta_{v},M^*_v)p(\theta_v \vert M_v^*)\dd\theta_vp(\bm{M}_{[v]}^*)}{\int f(y_v \vert \theta_v,M_v)p(\theta_v \vert M_v)\dd\theta_vp(\bm{M})}.
\end{align*}
Thus, the acceptance probability of each Metropolis-within-Gibbs step requires the computation of only a single integral. A complete sweep over the graph thus requires $ \vert V \vert $ such integrals but one could expect a reasonable proportion of these proposed moves to be accepted, in contrast to a global proposal which sought to update every node simultaneously.

Typically, even the node-wise marginal likelihood $f(y_v \vert M_v)$ will be difficult or impossible to evaluate analytically. A pseudo-marginal MH algorithm is a natural choice in such cases and is explored in the next section.

\subsubsection{Graph Model Selection using Node-Wise Likelihood Marginal Estimates}
\label{sec:whole-graph-model}
The node-wise update schedule, gives rise to nested iterations.  For clarity, term the outer iteration, indexed with $i$, of a full pass through the whole graph as the graphical iteration.
The inner iteration over $V$ will be termed the node-wise iteration.

Suppose that the MH Markov chain denoted, $(\bm{M}^{(i)})_{i =1}^{n}$, is at graphical iteration $i$ and that the state at node $v$ is being proposed to be changed.
Next, since the node-wise marginal likelihood is a normalising constant, for simplicity denote
$$Z_{v}(M) = f(y_v \vert M_v=M)\text{ for all } M \in \mathcal{M}.$$
That is, $Z_v(M)$ is the normalising constant at node $v$ when the model order $M_v=M$.

At graphical iteration $i$, at node $v$, for given model order proposal $M^{(i)}_v$, let
$(\widehat{Z}_v)^{(i)}$ denote the approximation of the marginal  $f(y_v \vert M_v=M^{(i)}_v).$
Associate with this generic estimator, the computation cost parameter $N$ (for the SMC sampler this was the number of particles).
As mentioned before this is a tuning parameter for the pseudo-marginal, thus also the algorithm presented below.

It is important to note the slight abuse of notation here --- $(\widehat{Z}_v)^{(i)}$ is dependent on the model order proposal $M_v^{(i)}$.
As previously stated, we require that $(\widehat{Z}_v)^{(i)}$ is unbiased.
Additionally, recall we denote $g(\cdot \vert M_v = M_v^{(i)})$ the possibly unknown density of $(\widehat{Z}_v)^{(i)}$.

Following the node-wise updating schedule, whenever a new state $M_v^* \sim q(M_{v}^{(i-1)}, \cdot)$ is proposed, a new node-wise marginal estimation, denoted $(\widehat{Z}_v)^{*}$, is recomputed. 
The node-wise acceptance ratio,
$\widehat{r}(\bm{M}^{(i-1)},\bm{M}_{[v]}^{(i-1)*})$,
is then $$\frac{p(\bm{M}^{*}_{[v]})(\widehat{Z}_v)^{*}q(M_v^*,M_v^{(i-1)})}{p(\bm{M}^{(i-1)})(\widehat{Z}_v)^{(i-1)}q(M_v^{(i-1)},M_v^*)}.$$
We emphasise that the computation of $\widehat{r}$, precisely speaking, involves not just the standard MH proposal of a new configuration in $\mathcal{M}^{V}$, but also a proposal of the random variable $\widehat{Z}_v$: it is
a pseudo-marginal algorithm in the sense of \cite{Andrieu2009}, with the added subtlety that the unbiased estimates of the marginal likelihood associated with every other node in the graph are also retained as a part of the extended state.
The justification for using this acceptance ratio will be detailed in Section \ref{sec:exact-samples-from}

The above choices and assumptions lead to the algorithm which we term the ``Node-Wise Pseudo-Marginal'' (NWPM) algorithm; presented in pseudo-code in {Algorithm \ref{algthm: Node-wise PM}}.

\begin{algorithm}
\caption{ The Node-Wise Pseudo-Marginal Algorithm}\label{algthm: Node-wise
  PM}
Given: (i) target density $\mu(\bm{M}) := \pi(\bm{M} \vert \bm{y})$; (ii)
unbiased node-wise marginal likelihood estimators $\widehat{Z}_v(M) \approx
\int f(y_v \vert M,\theta_v)p(\theta_v \vert M,\xi)\dd \theta_v $ for all  $M \in \mathcal{M}$, $v \in V$;
and (iii) coupling constant $J$, tuning parameter $N$.
\begin{enumerate}
\item At $i=1$:
\begin{enumerate}
\item Initialise $\bm{M}^{(1)}$.
\item Sample $(\widehat{Z}_v)^{(1)} \sim g(\cdot \vert M_v^{(1)})$ for each $v \in V$.
\end{enumerate}
\item For $i=2, \dots$:
\begin{enumerate}
\item For $v \in V$:
\begin{enumerate}
\item Sample:
$M_v^* \sim q(M_v^{(i-1)},\cdot).$
\item Sample:
$    (\widehat{Z}_v)^{*} \vert m_{v}^{*} \sim g(\cdot  \vert M_v^*).$
\item Compute:
  $$\hat{r} = \frac{p(\bm{M}_{[v]}^{*})(\widehat{Z}_v)^*q(M_v^*,M_v^{(i-1)})}{ p(\bm{M}^{(i-1)})(\widehat{Z}_v)^{(i-1)}q(M_v^{(i-1)},M^*_v,)}$$
\item With probability $\min\{1,\widehat{r}\}$ take:
$$M_v^{(i)} = M_v^*\text{ and } (\widehat{Z}_v)^{(i)} = (\widehat{Z}_v)^{*}$$
otherwise,
$$M_v^{(i)} = M_v^{(i-1)} \text{ and } (\widehat{Z}_v)^{(i)} = (\widehat{Z}_v)^{(i-1)}$$
\end{enumerate}
\end{enumerate}
\end{enumerate}
\end{algorithm}
Within the context of this paper, the NWPM algorithm proposed here is a broadly applicable extension of the GIMH algorithm described above in Algorithm \ref{algo:GIMHAlgorithm}.

The output of this algorithm, the Monte Carlo sample $(\bm{M}^{(i)})_{i=1}^{n}$, can be used to perform model selection in a principled manner.
For example, this could be done using the node-wise marginal modal model order:
i.e., at each node $v \in V$ the model order which occurs the most in the (marginal) chain $(M_v^{(i)})^{n}_{i=1}$ is selected.
Additionally, samples of the latent variables $\bm{\theta}$ are typically available as byproducts of the marginal likelihood estimates and can be used to perform parameter inference as demonstrated in the numerical studies below.

An advantage of this pseudo-marginal based approach in this setting is its flexibility. It can be used with essentially any unbiased estimator of marginal likelihood and hence allows existing technology from any application domain to be employed.
The incorporation of spatial dependence via the modelling framework and algorithm developed here can come at the very little computational cost by approximating or making pragmatic adaptations to Algorithm \ref{algthm: Node-wise PM}.
We employ such techniques in the simulation studies presented in Section \ref{section:EmStudy} and discus in more detail in Section \ref{sec:Approximations} below.

Finally, note that for convenience we assume that the tuning parameter $N$ is fixed to the same value for all $v \in V$.
This parameter typically determines the variance of the estimator $\widehat{Z}_v(M)$ and additional flexibility could be obtained by allowing it to vary between vertices.
Unsurprisingly, a larger $N$ should give a better performance at the expense of greater computational cost.
Indeed \cite{Andrieu2009}, and more recently \cite{Andrieu2016}, showed that when the dispersion of the marginal likelihood estimates are larger the mixing rate of any pseudo-marginal MH algorithm decreases.
Of course, there is a higher computational cost when using larger $N$ to reduce the estimate variance.
The effects of this tuning parameter, and the trade off,  on the mixing of the Markov chain in a standard pseudo-marginal algorithm have been studied by \citet{Sherlock2015,Doucet2015,Sherlock2016}.
These works give theoretical justification and numerical studies for the recommendation that $N$ should be chosen such that the variance of the log-likelihood estimator is close to one.

\subsubsection{Marginal invariant distribution of NWPM}
\label{sec:exact-samples-from} 
Here we establish the formal validity of a class of algorithms which includes the NWPM, allowing for more general moves than those described in Algorithm~\ref{algthm: Node-wise PM}. We show that this class of algorithms leads to a Markov chain with an invariant distribution coinciding with $\pi$, as given in \eqref{eq:PosteriorDensity}.
For further discussion on specifying proposals within this setting see Section \ref{sec:Mult-augM-pseudo}.

Recalling Assumption~\ref{ass:cip}, we may write the target distribution of the NWPM algorithm as 
$$ \mu(\bm{M}) \propto p(\bm{M})\prod_{v \in V} \int_{\Theta_v} f(y_v \vert \theta_v,M_v)p(\theta_v \vert M_v, \xi)\dd\theta_v ,$$
for $\bm{M}=(M_v)_{v \in V} \in \mathcal{M}^{V}$.

Given $\bm{M}$, denote the random vector of the unbiased node-wise marginal likelihoods estimators as $\widehat{\bm{Z}} \vert \bm{M} \sim g(\cdot \vert \bm{M})$, i.e.,
$$\bm{\widehat{Z}} = (\widehat{Z}_v(M_v):v\in V).$$
Henceforth using the shorthand $\widehat{Z}_{v}^M:=\widehat{Z}_{v}(M)$ for all $M \in \mathcal{M}$; we note that, due to independence,
$$g(\widehat{\bm{Z}} \vert \bm{M}) \propto \prod_{v \in V }g(\widehat{Z}_{v}^{M_v} \vert M_v).$$
Finally, define the extended target density 
\begin{align}
  \widehat{\mu}(\bm{M},\widehat{\bm{Z}})
  &= \frac{p(\bm{M}) \prod_{v \in V} \widehat{Z}_{v}^{M_v}g(\widehat{Z}_{v}^{M_v} \vert M_v)}{\sum_{\bm{M}'\in\mathcal{M}^V} 
    p(\bm{M}') \prod_{v \in V} f(y_v \vert M_v')}\\
  &= \frac{p(\bm{M})\prod_{v \in V}\widehat{Z}_{v}^{M_v}g(\widehat{Z}_{v}^{M_v} \vert M_v)}{f(\bm{y})}.\label{eq:pM_target_density} 
\end{align}

Here, we let $f(\mathbf{y})$ denote the marginal probability of the observed data, integrating out unknown parameters and unknown model orders; Given observation $\bm{Y} = \bm{y}$, we treat it as a normalising constant.
The NWPM algorithm has similar theoretical properties as GIMH. As such, the remainder of the argument of formal justification then follows the same as the GIMH case, see \cite{Andrieu2009}.
\begin{proposition}\label{prop_1} 
Let $\bm{M}$ denote a random variable in the graph model order space $\mathcal{M}^{V}$.
Let $\bm{\widehat{Z}} \vert \bm{M}$, $g$, $\mu$ and $\widehat{\mu}$ be defined as above.

For any $U \subset V$, let $q_U$ denote a Markov kernel on $(\mathcal{M} \times \mathbb{R}_+)^V$ of the form
\begin{align*}
  q_U((\bm{M},\bm{\widehat{Z}}),(\bm{M}^*&,\bm{\widehat{Z}}^*)) = \delta_{(\bm{M}_{-U},\bm{\widehat{Z}}_{-U})}(\bm{M}_{-U}^*,\bm{\widehat{Z}}^*_{-U})\\&\times
  q^{\mathcal{M}}_U(\bm{M}_U,\bm{M}_U^*) \prod_{v\in U} g(\widehat{Z}^*_v \vert M_v^*) ,
\end{align*}
where $q^{\mathcal{M}}_U$ denotes a Markov kernel on $\mathcal{M}^U$ and we slightly abuse density notation using the Dirac delta functions to indicate that variables associated with nodes outside $U$ are unchanged (absolute continuity of the numerator and denominator of the Metropolis-Hastings ratio is ensured by the symmetry of this singular part of the kernel).

The standard Metropolis-Hasting acceptance probability with target distribution $\widehat\mu$ can be expressed as:
\begin{equation}
  1 \wedge \frac{p(\bm{M}^*)(\prod_{v \in U}\widehat{Z}_v^* )q_U^{\mathcal{M}}(\bm{M}_U^*,\bm{M}_U)}{p(\bm{M})  (\prod_{v \in U}\widehat{Z}_v) q_U^{\mathcal{M}}(\bm{M}_U,\bm{M}_U^*)}
\end{equation}
and the marginal distribution of $\bm{M}$ under $\widehat\mu$ is $\mu$.
\end{proposition}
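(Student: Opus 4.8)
The plan is to verify the two assertions of Proposition~\ref{prop_1} in turn: first that the standard Metropolis--Hastings acceptance ratio for the kernel $q_U$ targeting $\widehat\mu$ simplifies to the stated expression, and second that $\widehat\mu$ admits $\mu$ as its $\bm{M}$-marginal. Both follow the template of \cite{Andrieu2009}, with the node-wise product structure (afforded by Assumption~\ref{ass:cip}) doing the work of decoupling the computation across vertices.

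For the acceptance ratio, I would write out the Metropolis--Hastings ratio
$$
\frac{\widehat\mu(\bm{M}^*,\bm{\widehat{Z}}^*)\,q_U((\bm{M}^*,\bm{\widehat{Z}}^*),(\bm{M},\bm{\widehat{Z}}))}{\widehat\mu(\bm{M},\bm{\widehat{Z}})\,q_U((\bm{M},\bm{\widehat{Z}}),(\bm{M}^*,\bm{\widehat{Z}}^*))}
$$
and substitute the product form of $\widehat\mu$ from \eqref{eq:pM_target_density} and of $q_U$. The normalising constant $f(\bm{y})$ cancels immediately. The key observation is that the Dirac-delta factor forces $\bm{M}_{-U}=\bm{M}_{-U}^*$ and $\bm{\widehat{Z}}_{-U}=\bm{\widehat{Z}}_{-U}^*$, so every factor in the product over $v\in V\setminus U$ is identical in numerator and denominator and cancels, leaving only factors for $v\in U$. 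The remaining proposal densities $\prod_{v\in U} g(\widehat{Z}^*_v\vert M_v^*)$ in the forward direction cancel against the matching $g(\widehat{Z}^{M_v}_v\vert M_v)$ factors supplied by $\widehat\mu(\bm{M},\bm{\widehat{Z}})$ in the reverse term — this is the crucial cancellation that makes the unknown estimator densities $g$ disappear — and similarly for the reverse direction, so that only the product $\prod_{v\in U}\widehat{Z}^*_v$ over $\prod_{v\in U}\widehat{Z}_v$ together with the prior ratio $p(\bm{M}^*)/p(\bm{M})$ and the $q_U^{\mathcal{M}}$ proposal ratio survive. I would note that the absolute continuity needed for this manipulation is guaranteed by the symmetry of the singular (Dirac) part, as flagged parenthetically in the statement.

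For the marginal claim, the plan is to integrate $\widehat\mu(\bm{M},\bm{\widehat{Z}})$ over $\bm{\widehat{Z}}\in\mathbb{R}_+^V$. By the product structure this integral factorises into $\prod_{v\in V}\int_{\mathbb{R}_+}\widehat{Z}_v^{M_v} g(\widehat{Z}_v^{M_v}\vert M_v)\,\dd\widehat{Z}_v^{M_v}$, and here I invoke the defining unbiasedness property of each node-wise estimator, $\mathbb{E}[\widehat{Z}_v(M)]=\int \widehat{Z}_v^{M} g(\widehat{Z}_v^{M}\vert M)\,\dd\widehat{Z}_v^{M}=f(y_v\vert M)=Z_v(M)$. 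Each integral therefore returns the true node-wise marginal likelihood, so the numerator becomes $p(\bm{M})\prod_{v\in V} f(y_v\vert M_v)$, which by the denominator's definition equals $\mu(\bm{M})$ up to the normalising constant $f(\bm{y})$. This reproduces \eqref{eq:PosteriorDensity} exactly.

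The main obstacle is less a deep difficulty than a bookkeeping one: making the cancellation of the $g$-factors and the Dirac-delta factors rigorous despite the ``slight abuse of density notation'' acknowledged in the statement, since $q_U$ is a genuinely singular kernel on $(\mathcal{M}\times\mathbb{R}_+)^V$ rather than one possessing a density with respect to a product reference measure. I would handle this by treating the discrete $\mathcal{M}$-components and continuous $\bm{\widehat{Z}}$-components separately — counting measure on $\mathcal{M}^V$ and Lebesgue measure on $\mathbb{R}_+^V$ — and arguing that the symmetric singular part on the unchanged coordinates cancels in the Radon--Nikodym derivative defining the acceptance ratio, exactly as in the reversible-jump/pseudo-marginal constructions of \cite{Andrieu2009}. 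Establishing $\widehat\mu$-reversibility of the resulting kernel then gives invariance of $\widehat\mu$, and the marginal computation transfers invariance to $\mu$, completing the justification of the NWPM algorithm as a special case (taking $U=\{v\}$ and $q^{\mathcal{M}}_U$ symmetric).
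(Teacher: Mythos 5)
Your proposal is correct and follows essentially the same route as the paper's own proof: writing out the Metropolis--Hastings ratio for $\widehat\mu$ with proposal $q_U$, using the symmetric singular part to identify the $-U$ coordinates, cancelling the $g$-factors for $v\in U$ against those in the target (with the off-$U$ factors equal to one almost surely), and obtaining the marginal by integrating out $\bm{\widehat{Z}}$ and invoking unbiasedness, $\mathbb{E}_g[\widehat{Z}_v(M_v)]=f(y_v\vert M_v)$. Your closing remarks on handling the singular kernel rigorously merely make explicit the measure-theoretic point the paper flags parenthetically, so there is no substantive difference.
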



\begin{proof}
  First, consider the acceptance ratio of a MH (Metropolis-Hastings) algorithm with target distribution $\widehat\mu$ and proposal kernel $q_U$:
  \begin{align*}
    \hat{r} =& \frac{\widehat{\mu}(\bm{M}^*,\widehat{\bm{Z}}^*) q_U((\bm{M}^*,\widehat{\bm{Z}}^*), (\bm{M},\widehat{\bm{Z}}))}{\widehat\mu(\bm{M},\widehat{\bm{Z}}) q_U((\bm{M},\widehat{\bm{Z}}), (\bm{M}^*,\widehat{\bm{Z}}^*))}
  \end{align*}
  upon inserting the definition of $q_U$ one observes that the numerator is absolutely continuous with respect to the denominator and the singular elements simply impose that $\bm{M}_{-U} = \bm{M}^*_{-U}$ and $\widehat{\bm{Z}}_{-U} = \widehat{\bm{Z}}_{-U}^*$, and we have:
  \begin{align*}
    \hat{r} =& \frac{\widehat{\mu}(\bm{M}^*,\widehat{\bm{Z}}^*) q_U^{\mathcal{M}}(\bm{M}^*_U, \bm{M}_U) \prod\limits_{v\in U} g(\widehat{Z}_v \vert M_v)}{\widehat{\mu}(\bm{M},\widehat{\bm{Z}}) q_U^{\mathcal{M}}(\bm{M}_U, \bm{M}^*_U) \prod\limits_{v\in U} g(\widehat{Z}_v^* \vert M_v^*)}\\
    =& \frac{p(\bm{M}^*)}{p(\bm{M})} \prod_{v\in V} \frac{\widehat{Z}_v^* g(\widehat{Z}_v^* \vert M_v^*)}{\widehat{Z}_v g(\widehat{Z}_v \vert M_v)}
    \\&\times
    \frac{q_U^{\mathcal{M}}(\bm{M}^*_U, \bm{M}_U)}{q_U^{\mathcal{M}}(\bm{M}_U, \bm{M}^*_U) } \prod\limits_{v\in U}\frac{ g(\widehat{Z}_v \vert M_v)}{g(\widehat{Z}_v^* \vert M_v^*)}\\
    =& \frac{p(\bm{M}^*)}{p(\bm{M})} \prod_{v \in U} \frac{\widehat{Z}_v^*}{\widehat{Z}_v}\\      &\times \frac{q_U^{\mathcal{M}}(\bm{M}^*_U, \bm{M}_U)}{q_U^{\mathcal{M}}(\bm{M}_U, \bm{M}^*_U) } \underset{=1}{\underbrace{ \prod_{v\not\in U} \frac{\widehat{Z}_v^* g(\widehat{Z}_v^* \vert M_v^*)}{\widehat{Z}_v g(\widehat{Z}_v \vert M_v)}}},
  \end{align*}

  where the final factor is equal to one almost surely under the proposal distribution, and the result follows.

  The marginal distribution of $\bm{M}$ follows by the essentially the same argument as in the standard pseudo-marginal context.
\begin{align*}
  &\int_{\mathbb{R}_+^{V}} \widehat{\mu}(\bm{M},\widehat{\bm{Z}}) \dd\widehat{\bm{Z}} \\
  & = \frac{p(\bm{M})}{f(\bm{y})}\prod_{v \in V} \int_{\mathbb{R}_+}\widehat{Z}_{v}^{M_v}g(\widehat{Z}_{v}^{M_v} \vert M_v)\dd\widehat{Z}_{v}^{M_v} \\
  &= \frac{p(\bm{M})}{f(\bm{y})}\prod_{v \in V}\mathbb{E}_g[\widehat{Z}_{v}(M_v)]\\
  &= \frac{p(\bm{M})\prod_{v \in V}f(y_v \vert M_v)}{f(\bm{y})}\\
  &= \mu(\bm{M}).
\end{align*} 
\end{proof}

Clearly, the node-wise proposals described previously fit this framework with $U=\{u\}$ being the single node being updated, although this framework would allow a somewhat broader class of proposals and blocked Metropolis-within-Gibbs type strategies to be explored. Standard arguments mean that a mixture or cycle of such kernels will also preserve $\widehat\mu$ as the invariant distribution.

One might anticipate that the theoretical properties of the NWPM algorithm might be further characterised by extending techniques from the standard pseudo-marginal setting such as \cite{Andrieu2015,Andrieu2016,Andrieu2021} to this one, indeed it would be surprising were it to behave substantially differently, but doing this rigorously would involve some delicate technical work and is beyond the scope of this paper.

Next, we discuss some immediate extensions and innovations of this proposed methodology; resulting in further algorithms, approximations and future approaches.

\subsection{Multiple Augmentation Pseudo-marginal Algorithms}
\label{sec:Mult-augM-pseudo} 

In this section we will explore the potential of further augmentation of the state space.
The specialised setting of the problem, allows approaches which to the authors' knowledge, have not been previously studied.
The strategy developed here allows for a number of further extensions.
Some simple approaches are discussed below --- methods based on these approaches will also be evaluated in the numerical studies.

The augmentation of the state space used to justify node-wise pseudo-marginal algorithms can be further extended by adding an estimate of the marginal likelihood associated with \emph{every} possible model at \emph{every} node to the state space. Although doing so may seem counter-intuitive and leads to a rather large state space, it allows a number of algorithmic innovations. In particular, by considering the following extended state space it will be possible to use a variety of standard MH moves in order to explore this space.

Specifically, the further extended space allows us to decouple the updating of the marginal likelihood estimates from those of the model configurations themselves; the standard pseudo-marginal update would require that these two updates occur simultaneously. We exploit this decoupling when we go on to consider the possibility of updating the marginal likelihood estimates with lower frequency than the model configurations. 

Let $\bar{\bm{Z}} := (\widehat{Z}_{v}(M): v \in V, M \in \mathcal{M})$ be a vector of marginal likelihood estimator for each model order $M \in \mathcal{M}$, at every node $v \in V$.
Consider, next, a further extended form of the target density \eqref{eq:pM_target_density} : 

$$\bar{\mu}(\bm{M},\bar{\bm{Z}}) = \frac{p(\bm{M})}{f(\bm{y})}\prod_{v \in V} \widehat{Z}_{v}^{M_v} \left(\prod_{m' \in \mathcal{M}} g(\widehat{Z}_{v}^{M'} \vert M')\right).$$
The extended joint density $\bar{\mu}$ is constructed such that its marginal over $\bm{M}$ coincides exactly with the \emph{correct posterior distribution}. Indeed, using essentially the same argument as in Proposition~\ref{prop_1}, we have:
\begin{align*}
  &\int_{\mathbb{R}^{V \times \mathcal{M}}_+} \bar{\mu}(\bm{M},\bar{\bm{Z}}) d\bar{\bm{Z}} \\
  &= \frac{p(\bm{M})}{f(\bm{y})}\prod_{v \in V} \int_{\mathbb{R}_{+}^{\mathcal{M}}}\widehat{Z}_{v}^{M_v}\left(\prod_{m' \in \mathcal{M}}g(\widehat{Z}_{v}^{M'} \vert M') \dd\widehat{Z}_v^{m'}\right) \\
    &= \frac{p(\bm{M})}{f(\bm{y})}\prod_{v \in V} \int_{\mathbb{R}_+}\widehat{Z}_{v}^{M_v} g(\widehat{Z}_{v}^{M_v} \vert M_v) \dd\widehat{Z}_{v}^{M_v} \\
  &=  \mu(\bm{M}).\\
\end{align*}

This further extended target distribution allows for some generalisations of the standard pseudo-marginal algorithm in the context of interest; it depends fundamentally on the fact that the variables associated with each node of the graph take values within a small finite set. It is possible to consider a variety of MH like moves applied to this extended target density. For simplicity we will consider only two types of proposal here: one which changes a single node's associated model and another which refreshes the likelihood estimates for a single node.


Denote the proposal of the random vector $\bar{\bm{Z}} = (\widehat{Z}^{M}_{v}: v \in V, M \in \mathcal{M})$ by $\bar{\bm{Z}}^*:=((\widehat{Z}^{M}_v)^*: v \in V, M \in \mathcal{M})$.
Intuitively, $\bar{\bm{Z}}^*$ denotes a re-computation of all the components of $\bar{\bm{Z}}$.

First consider a proposal $Q_u^1$ for the model associated with node $u$.
Recalling that $\bm{M}^*_{[u]}$ is equivalent to $\bm{M}$ with component $M^*_{u}\neq M_{u}$, we have
\begin{align*}
  &Q_u^1\left((\bm{M}, \bar{\bm{Z}}),(\bm{M}^*, \bar{\bm{Z}}^*)\right)\\
  =& \left( q_u^1(M_u,M^*_u) \prod_{v \neq u} \delta_{M_v,M_v^*} \right) \prod_{v\in V}\prod_{M \in \mathcal{M}} \delta_{\widehat{Z}_v^{M}} ((\widehat{Z}_v^{M})^*)\\
  =& \underbrace{\delta_{\bm{M}_{-u},\bm{M}^*_{-u}} q_u^1(M_u,M^*_u)}_{\text{Gibbs-like state proposal}}  \underbrace{\prod_{v\in V}\prod_{M \in \mathcal{M}} \delta_{\widehat{Z}_v^{M}}((\widehat{Z}^M_v)^*)}_{ \text{Same marginal estimates}  }.
\end{align*}
Here $\delta_{x,y}$ is the Kronecker delta, taking value $1$ if $x=y$ and $0$ otherwise and $\delta_z(z^*)$ is, with the obvious abuse of notation, the singular measure concentrated at $z$ evaluated over an infinitesimal neighbourhood of $z^*$. In this proposal, only a change of $M^*_u$ is proposed --- in particular, $\bar{\bm{Z}}$ and $\bm{M}_{-u}$ do not change.
Subsequently, the usual Metropolis-Hastings acceptance probability for this move is:
\begin{align*}
  & 1 \wedge \frac{ \bar{\mu}(\bm{M}^*_{[u]}, \bar{\bm{Z}}) q_u^1(M_u^*,M_u)}{ \bar{\mu}(\bm{M}, \bar{\bm{Z}}) q_u^1(M_u,M_u^*)}\\
  =& 1 \wedge \frac{\widehat{Z}^{M^*_u}_u p(\bm{M}^*_{[u]}) q_u^1(M_u^*,M_u) }{\widehat{Z}^{M_u}_{u} p(\bm{M}) q_u^1(M_u,M_u^*)}.
\end{align*}
For simplicity we have assumed that $q_u^1(M_u,M_u^*)$ is independent of the remaining state variables, but this is not necessary for its correctness and proposals which depend upon their values could easily be implemented.
Note that the augmentation of the state space ensures that $\widehat{Z}^{M^*_u}_{u}$ exists and is available.

Similarly, consider a proposal $Q_u^2$ which refreshes the augmenting variables associated with node $u$:

\begin{align*}
  &Q_u^2\left((\bm{M}, \bar{\bm{Z}}), (\bm{M}^*, \bar{\bm{Z}}^*) \right)\\ =&
                                                                              \delta_{\bm{M},\bm{M}^*}  \prod_{M \in \mathcal{M}} \left( g( (\widehat{Z}_u^M)^* \vert m) \prod_{v \neq u} \delta_{\widehat{Z}_v^m}((\widehat{Z}_{v}^{M})^* ) \right), 
\end{align*}
for which the Metropolis-Hastings acceptance probability is simply
$$  1 \wedge \frac{(\widehat{Z}_u^{M})^*}{\widehat{Z}_u^{M_u}} $$
by exploiting the same cancellation as in the standard pseudo-marginal setting. Clearly moves which update only some of the augmenting variables associated with node $u$ could be justified in the same way.
The augmenting variable $\widehat{Z}^{M}_u$ for any $M\in \mathcal{M}$ can be re-estimated, not be just the normalising constant associated with the current state of $M_u$. This flexibility is the result of the extension of the state space.

Standard arguments allow the combination of moves of these types, and others, within mixtures or cycles to provide an irreducible chain allowing considerable flexibility. In particular, it is no longer necessary to sample a marginal likelihood estimate for every proposed move in the state space.
In the numerical studies below, we evaluate the performance of using combinations of such proposals.

\subsection{Approximations of the NWPM Algorithm}\label{sec:Approximations}
We finish this section with a brief exploration of approximations of the exact pseudo-marginal algorithm described above with the aim of obtaining inference which is almost as good at a fraction of the computational cost by allowing for a small bias in those estimates.

Clearly, a significant portion of the computational load is used to compute the marginal likelihood; this is especially the case when using more sophisticated methods such as the SMC sampler.
For the NWPM method, as presented in Algorithm \ref{algthm: Node-wise PM}, the pseudo-marginal MH Markov chain of length $n$ requires $n \vert V \vert $ marginal likelihood estimates. This quickly becomes infeasible with large data sets such as PET images, which may contain up to $10^6$ voxel time series to be analysed.
It is, therefore, worthwhile to consider strategies that reduce the number of marginal likelihood estimates that are required.


The multiple-augmentation approach to the NWPM algorithm decouples the estimation of marginal likelihoods from moves within the state space. One can envisage, for example, updating the marginal likelihood estimates only every $\kappa$ iterations for some integer $\kappa$. Taking the limiting case as $\kappa\to\infty$ and estimating the marginal likelihood once each model at each node suggests a cheap approximate scheme. Naturally, this approach leads to a Markov chain which is not irreducible on the extended space and which can no-longer be considered a pseudo-marginal algorithm.

Such an algorithm would simply sample each of the marginal likelihood of all the model orders only once, for each node, before starting the chain; then, follow Algorithm \ref{algthm: Node-wise PM} otherwise, using only these initial single estimates. This amounts to making a stochastic approximation which will change the invariant distribution of the resulting Markov chain.

Doing leads to a (now marginal) MH Markov chain, that targets an approximation of the target density rather than the target density itself.
As such, it would not be formally justified under the pseudo--marginal framework, and there would be a loss of accuracy with any subsequent results, since it is an approximation.
Since we only do a single estimation of the marginal likelihoods we will refer to this method as the Node-wise Single-Estimation (NWSE) algorithm.

The biggest appeal of the NWSE is that, for a chain of the same length $n$, the sampler would only need to be used $ \vert V \vert  \vert \mathcal{M} \vert $ times, reducing the costs by $n /  \vert \mathcal{M} \vert $ times.
Any reduction in accuracy and performance could be adjusted for by using some of the saved residual computational resources towards reducing the variance of the sampler marginal likelihood estimates.

This approximation would allow at least preliminary spatial analyses to be conducted with little additional computational cost beyond that required for the associated mass-univariate analysis:
if existing analysis has been done, and thus some estimates of the marginal likelihood have already been obtained then these can be readily used within the algorithm to incorporate spatial dependence.



The NWSE algorithm involves running an MCMC chain with the wrong invariant distribution: that in which the exact marginal likelihoods are replaced with the realisations obtained when they are sampled. It is of intereste, therefore, to establish how different such a distribution is from the true posterior distribution. The following elementary proposition, although it certainly does not provide a formal justification of the NWSE algorithm itself, demonstrates that in the context of small discrete spaces the error introduced by approximating marginal likelihoods can be controlled under reasonable conditions.

\begin{proposition}
  For each model $M \in \mathcal{M}$, let $\widehat{Z}^M:=\widehat{Z}(M)$ be a RV with expectation $Z^M:=f(y\vert M)$, corresponding to the associated marginal likelihood, and variance $\sigma_M^2 < \sigma_{*}^2 < \infty$. If the $(\widehat{Z}^M)_{M\in\mathcal{M}}$ are mutually independent, then, letting $M^* = \arg\max \{Z^M\}_{M\in\mathcal{M}}$ which we assume to be unique the probability of selecting the correct model via maximisation of the marginal likelihood (equivalently, the posterior mode of the distribution over models given a uniform prior over $\mathcal{M}$) is at least:%
\begin{align*}
1 - ( \vert \mathcal{M} \vert -1)\left(1 + \frac{\Delta^2}{2\sigma_*^2}\right)^{-1},
\end{align*}
  where $\Delta = \min_{L\neq M^*} Z^{M^*} - Z^L$ and $\sigma_*^2 = \max_{L\in\mathcal{M}} \sigma_L^2$. 
\end{proposition}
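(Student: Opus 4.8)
The plan is to bound the probability of \emph{error}---the event that maximising the sampled marginal likelihoods $(\widehat{Z}^M)_{M\in\mathcal{M}}$ returns some $L\neq M^*$---and then to take the complement. Writing the error event as $\bigcup_{L\neq M^*}\{\widehat{Z}^L \geq \widehat{Z}^{M^*}\}$ and applying a union bound, it suffices to control each pairwise term $\mathbb{P}(\widehat{Z}^L\geq\widehat{Z}^{M^*})$ uniformly in $L$ and then sum the resulting $|\mathcal{M}|-1$ contributions.

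For a fixed $L\neq M^*$, I would introduce the difference $D_L := \widehat{Z}^{M^*}-\widehat{Z}^L$. By linearity its mean is $\mu_L := Z^{M^*}-Z^L$, which is strictly positive and, by the definition of $\Delta$, satisfies $\mu_L\geq\Delta$; by mutual independence its variance is $\tau_L^2 := \sigma_{M^*}^2+\sigma_L^2 \leq 2\sigma_*^2$. The pairwise error event $\{\widehat{Z}^L\geq\widehat{Z}^{M^*}\}$ is precisely $\{D_L\leq 0\}=\{D_L-\mu_L\leq -\mu_L\}$, i.e.\ a one-sided deviation of $D_L$ below its mean by at least $\mu_L$.

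The key step is to bound this one-sided deviation sharply. A crude two-sided Chebyshev inequality would only yield $\tau_L^2/\mu_L^2$, which is weaker than the stated bound; the exact form requires the \emph{one-sided} Chebyshev (Cantelli) inequality, giving $\mathbb{P}(D_L-\mu_L\leq-\mu_L)\leq \tau_L^2/(\tau_L^2+\mu_L^2)$. I would then pass to worst-case values using two elementary monotonicity facts: since $a\mapsto \tau^2/(\tau^2+a^2)$ is decreasing, $\mu_L\geq\Delta$ permits replacing $\mu_L$ by $\Delta$; and since $t\mapsto t/(t+\Delta^2)$ is increasing, $\tau_L^2\leq 2\sigma_*^2$ permits replacing $\tau_L^2$ by $2\sigma_*^2$. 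This delivers $\mathbb{P}(\widehat{Z}^L\geq\widehat{Z}^{M^*})\leq 2\sigma_*^2/(2\sigma_*^2+\Delta^2)=(1+\Delta^2/(2\sigma_*^2))^{-1}$ uniformly in $L$, and the union bound over the $|\mathcal{M}|-1$ incorrect models, followed by taking the complement, yields the claimed lower bound.

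The only genuinely non-routine point is recognising that Cantelli's inequality, rather than the ordinary Chebyshev bound, is what produces the exact factor $(1+\Delta^2/(2\sigma_*^2))^{-1}$; everything else is a union bound together with monotone worst-casing. I would also remark that the argmax is assumed unique and that ties $\widehat{Z}^L=\widehat{Z}^{M^*}$ are harmlessly absorbed into the event $\{D_L\leq 0\}$, so no separate treatment of ties is needed.
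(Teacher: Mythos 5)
Your proof is correct and follows essentially the same route as the paper's: a union bound over the pairwise error events $\{\widehat{Z}^L \geq \widehat{Z}^{M^*}\}$, Cantelli's inequality applied to the difference $\widehat{Z}^{M^*}-\widehat{Z}^L$ (whose mean is at least $\Delta$ and whose variance, by independence, is $\sigma_{M^*}^2+\sigma_L^2 \leq 2\sigma_*^2$), and then worst-casing over $L$. The only cosmetic difference is that you make the difference variable $D_L$ and the two monotonicity substitutions explicit, whereas the paper performs the same replacements directly when combining the bounds.
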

\begin{proof}
  Let $M^* = \arg\max \{Z^M\}_{M\in\mathcal{M}}$, then:
  \begin{align*}
    \mathbb{P}(\{\widehat{Z}^{M^*}& > \widehat{Z}^L \forall L \neq M^*\})\\
    =& 1 - \mathbb{P}(\cup_{L\neq M^*} \{\widehat{Z}^{M^*} \leq \widehat{Z}^L\})\\
    \geq& 1 - \sum_{L \neq M^*} \mathbb{P}(\{\widehat{Z}^{M^*} \leq \widehat{Z}^L\})\\
    \geq& 1 - ( \vert \mathcal{M} \vert -1)\left(1 - \min_{L \neq M^*} \mathbb{P}(\widehat{Z}^{M^*} > \widehat{Z}^L)\right),
  \end{align*}
  by the Union bound.
  Applying Cantelli's inequality, noting that $Z^{M^*} - Z^L > 0$:
  \begin{align*}
    \mathbb{P}(\{&\widehat{Z}^{M^*} > \widehat{Z}^L\})\\ = & \mathbb{P}(\{\widehat{Z}^{M^*} - \widehat{Z}^L - (Z^{M^*} - Z^L) > - (Z^{M^*} - Z^L)\})\\
    \geq & 1 - \frac{\sigma_{M^*}^2 + \sigma_L^2}{\sigma_{M^*}^2 + \sigma_L^2 + (Z^{M^*} - Z^L)^2}\\
    = & 1 - \left(1 + \frac{(Z^{M^*} - Z^L)^2}{\sigma_{M^*}^2 + \sigma_L^2}\right)^{-1}.
  \end{align*}
  Combining these expressions yields:
  \begin{align*}
    \mathbb{P}(&\{\widehat{Z}^{M^*} > \widehat{Z}^L \forall L \neq M^*\})\\ 
    \geq& 1 - ( \vert \mathcal{M} \vert -1) \max_{L\neq M^*} \left[1 -  1 + \left(1 + \frac{(Z^{M^*} - Z^L)^2}{\sigma_{M^*} + \sigma_L^2}\right)^{-1}\right]\\
    \geq& 1 - ( \vert \mathcal{M} \vert -1) \left(1 + \frac{\min_{L\neq M^*} (Z^{M^*} - Z^L)^2}{2\max_{L\in\mathcal{M}} \sigma_L^2}\right)^{-1}\\
    =& 1 - ( \vert \mathcal{M} \vert -1) \left(1 + \frac{\Delta^2}{2\sigma_*^2}\right)^{-1}.
  \end{align*}
\end{proof}

This bound can be made arbitrarily close to 1 by choosing estimators with sufficiently small variance. As the variance of the normalising constant estimates needs to be assessed to allow pseudo-marginal algorithms to be tuned, it is reasonable to suppose that this information could be obtained in settings in which this type of method is used and hence that a reasonable degree of confidence can be obtained that the use of this approximation does not substantially influence the resulting inference. Naturally, this result suggests that if estimating each marginal likelihood once rather than for every algorithmic step as in a pseudo-marginal algorithm, a relatively small variance might be required to obtain good performance.

In the numerical study, presented below, we investigate, evaluate and compare the effects when using the NWSE algorithm in different settings.

\section{Applications}\label{section:EmStudy}
In this section, the NWPM algorithm and some variants are applied to a simple toy model, in which the true marginal likelihood is known, and to a compartmental model for PET data with both simulated and real data.

The software implementation of these methods can be found in the  R package \texttt{bayespetr}, available at: \url{https://github.com/dt448/bayespetr}. The proposed algorithms are somewhat computationally intensive and here we focus on rather short Markov chains, a situation which we expect to be most widely useful; in Appendix~\ref{app:Mcmc-traces-long} we establish that running these chains for much longer does not materially change the resulting inference (as we have found to be the case in all settings which we have explored).

The $20 \times 20$ image displayed in Figure \ref{fig:GTconfig1} will be used to generate the ground truth Potts configuration for all simulated-data experiments. It was adapted from \cite{Bezener2018} and includes a range of spatial structures of varying complexity.
The image is split into four regions, labelled $R_0, \ldots,\linebreak[1] R_3$; all pixels within a given region have a common model order; the details vary between experiments and are given below.

\begin{figure}
  \centering
	\input{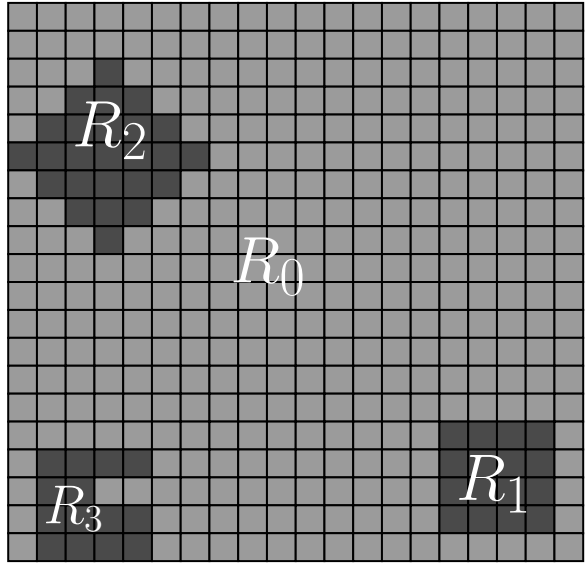}
	\caption{ Ground Truth Configuration: spatial structure used to produce ground truth Potts configuration for the simulated data studies. A $20 \times 20$ grid, with four homogeneous regions.}
	\label{fig:GTconfig1}
\end{figure}

Whenever the NWPM algorithms are used in the numerical studies below, $\bm{M}$ was initialised using the Gibbs a sampler targeting the prior (Potts) distribution.
However, when analysing measured PET data in Section \ref{sec:real-pet-data-analysis}, the algorithms were initialised from the output of the spatially independent SMC sampler method.

In the empirical studies below, we investigated a simple proposal strategy based on the multiple augmentation space, as discussed in Section \ref{sec:Mult-augM-pseudo}.

  In brief, the proposal kernel $Q^1$, together with $Q^2$ at every $\kappa \in \mathbb{N}$ graphical iteration, is used. 
  Essentially, upon initialising the chain, only the change of the model at the node is proposed.
  After every $\kappa-$th graphical iterations, changes in the auxiliary variable or marginal likelihood estimates for every model order at every node is proposed to change.
  In this section, we will refer to this variant as the NWMA algorithm, with tuning parameter $\kappa$.

  \paragraph{SMC Sampler:}
The algorithms considered require unbiased estimates of $\int f(y_v\vert\theta_v,M_v)p(\theta_v\vert M_v)d\theta_v$ at node $v \in V$, for each model order $M \in \mathcal{M}$.
The SMC sampler can be used to compute these node-wise marginal likelihoods by targeting the parameter posterior distribution $\pi(\theta_v\vert y_v,M_v)$ using an annealing schedule.

More precisely, for each model order $M \in \mathcal{M}$, let the sequence of target distribution of the SMC sampler be $\{\pi_t\}_{t \geq 1}$;
Here, we define
$$\pi_t(\theta_{v,t} \vert y_v,M_v) \propto p(\theta_{v,t} \vert M_v,y_v)f(y_v \vert \theta_{v,t},M_v)^{\alpha(t/T)},$$
where the total number of intermediate distributions $T$ and annealing scheme, $\alpha:[0,1]\rightarrow[0,1]$, may be different for each model order \citep[Algorithm 2]{Zhou2016}.
  For example, the annealing scheme could be a simple fixed schedule such as $\alpha(t/T):[0,1]\rightarrow (t/T)^5$, which following \citet{Zhou2016}, we refer to as the ``Prior 5'' annealing scheme.
  Alternatively, $\alpha_t$ can be determined adaptively based on metrics of the distance between intermediate distributions, such as the ESS \citep{Jasra2010,Kong1994} or CESS \citep{Zhou2016}.

For the numerical studies below, we use the SMC sampler.
Pilot studies showed that Prior 5 and the CESS-adaptive annealing scheme produced similar results in these examples, thus the Prior 5 scheme is used for simplicity.

\subsection{Toy Model Simulation Studies} \label{section:NormalDistrToyExample}

\paragraph{Toy Model:}
A simple toy model in which both the prior and model likelihood are normal, at every node, is used to validate the proposed methodology.
The data-set comprises a 2-dimensional digital image, each pixel having a scalar intensity. The graph $G=(V,E)$ used to represent the spatial structure is a finite square lattice, i.e., it has vertex set $V \subset \mathbb{Z}^2$ with nodes $v=(v_1,v_2)^\top$ and
edge set $E=\{ \langle u,v \rangle : d(u,v) = 1 \}$, where $d(u,v) = \sum_i  \vert u_i-v_i \vert $ is the $L^1$ (Manhattan) distance.

In the notation of Section \ref{sssection:S-T model selection with Potts}, given an image $\bm{Y} = (y_v \in \mathbb{R} : v \in V)$, the model studied here is:
\begin{align*}
  \bm{M}  &\sim \text{Potts}(J, G_{\bm{Y}}), &&\\
  \mu_v \vert M_{v} &\sim \mathcal{N}(\mu_{0}^{(M_v)}, \sigma_0^2) \text{ for all }v \in V,&&\\
  Y_{v} \vert M_{v}, \mu_v &\sim \mathcal{N}(\mu_v,\sigma^2) \text{ for all }v \in V;&&
\end{align*}
where
$\mathcal{N}(\mu,\sigma^2)$ denotes the normal distribution of mean $\mu$ and variance $\sigma^2>0$.

At each pixel, $v$,  we model the data point $Y_v$ as a normal random variable with mean $\mu_v$ and variance $\sigma^2$, which is assumed known and is common to all pixels.
The prior over the latent variable $\mu_v$ is normal with mean $\mu_0^{(M_v)}$determined entirely by the model order $M_v$ and fixed variance $\sigma_0^2$.
As before, the model order configuration will be the random variable $\bm{M}$ with a Potts prior distribution.

The node-wise marginal likelihood can be straightforwardly evaluated:
\begin{align*}
  f(y \vert  M) &= \int_{\mathbb{R}} \mathcal{N}(\mu;\mu_0^{(M)},\sigma_0^2)\mathcal{N}(y;\mu,\sigma^2)\dd\mu&\\
          &= \mathcal{N}(y;\mu_0^{(M)},\sigma^2+\sigma_0^2),
\end{align*}
for all  $M \in \mathcal{M}$.
This is trivial to compute since all these terms are known.

The spatial configuration of model order used to simulate the data was specified to represent simple spatial structures; the ground truth value of $\bm{M}$ was fixed to be the Potts configuration shown in Figure~\ref{fig:GTconfig1}. 

\paragraph{Toy Model Ground Truth:}
Ground Truth Configuration, as shown in Figure \ref{fig:GTconfig1} was used to generate a ground truth model order image.
For this toy model, this is a $20 \times 20$ Potts configuration with model order space $\mathcal{M} = \{\mathrm{A},\mathrm{B}\}$.
Region $R_0$ was fixed to model order $M_v = \mathrm{A}$ and the remaining three regions $R_1, R_2$ and $R_3$ where fixed to model order $M_v= \mathrm{B}$.
At each node $v \in V$, the mean parameter $\mu_0^{(M_v)}$ will depend on $M_v$.
In other words, the lighter pixels have nodes with hyper-parameter $\mu^{(A)}_0 = +5$ and the darker pixels have nodes with hyper-parameter $\mu^{(B)}_0 = -5$.
Finally, we specified $\sigma_0^2 = 5^2$ and $\sigma^2 = 1^2$.

\subsubsection{Simulation Study 1: The Coupling Constant}\label{subsection: simulation study 1}
As discussed in Section \ref{sec:potts-model}, typically it is difficult to infer the coupling constant $J$ of the Potts distribution \citep{Moller2006,Moores2020}.
In this work, we make the assumption that $J$ is known when using the NWPM algorithm --- this is not too unreasonable as $J$ is a parameter of the prior distribution.
In this vein, here we evaluate the performance of the algorithm for various values of $J$.

The following design was used: An image data set of $20 \times 20 = 400$ pixels(nodes) from the toy model and model order ground truth as dictated by the Toy Model Ground Truth above, was generated.
This simulated image was then analysed using the NWPM algorithm, for varying values of the coupling constant $J$.
Coupling constants ranging from $J=0$ to $J=5$ were investigated.
Based on pilot studies, an SMC sampler with $N=50$ particles and $T=80$ distributions was used within the NWPM algorithms.
The pseudo-marginal Markov chain was ran for $n=100$ iterations.
As mentioned above, the pseudo-marginal MH Markov chain was initialised using a Gibbs sampler, with the respective coupling constant.
This was done for 50 replicates of the NWPM Markov chain.

Model selection was carried out by selecting the modal model order marginally at each node $v \in V$.
In other words, the modal state in which each $M_v^{(i)}$, for $i=1, \dots, 100$ was selected for each pixel $v \in V$.
The empirical averages, over the $50$ replicates was calculated.
The mean percentages of the nodes for which the correct model order was selected for the different variations of the NWPM algorithm is shown in Figure \ref{graph:Study1State5_2020}.
\begin{figure*}
  \centering
  \begin{subfigure}[b]{0.6\textwidth}
    \includegraphics[width =\textwidth]{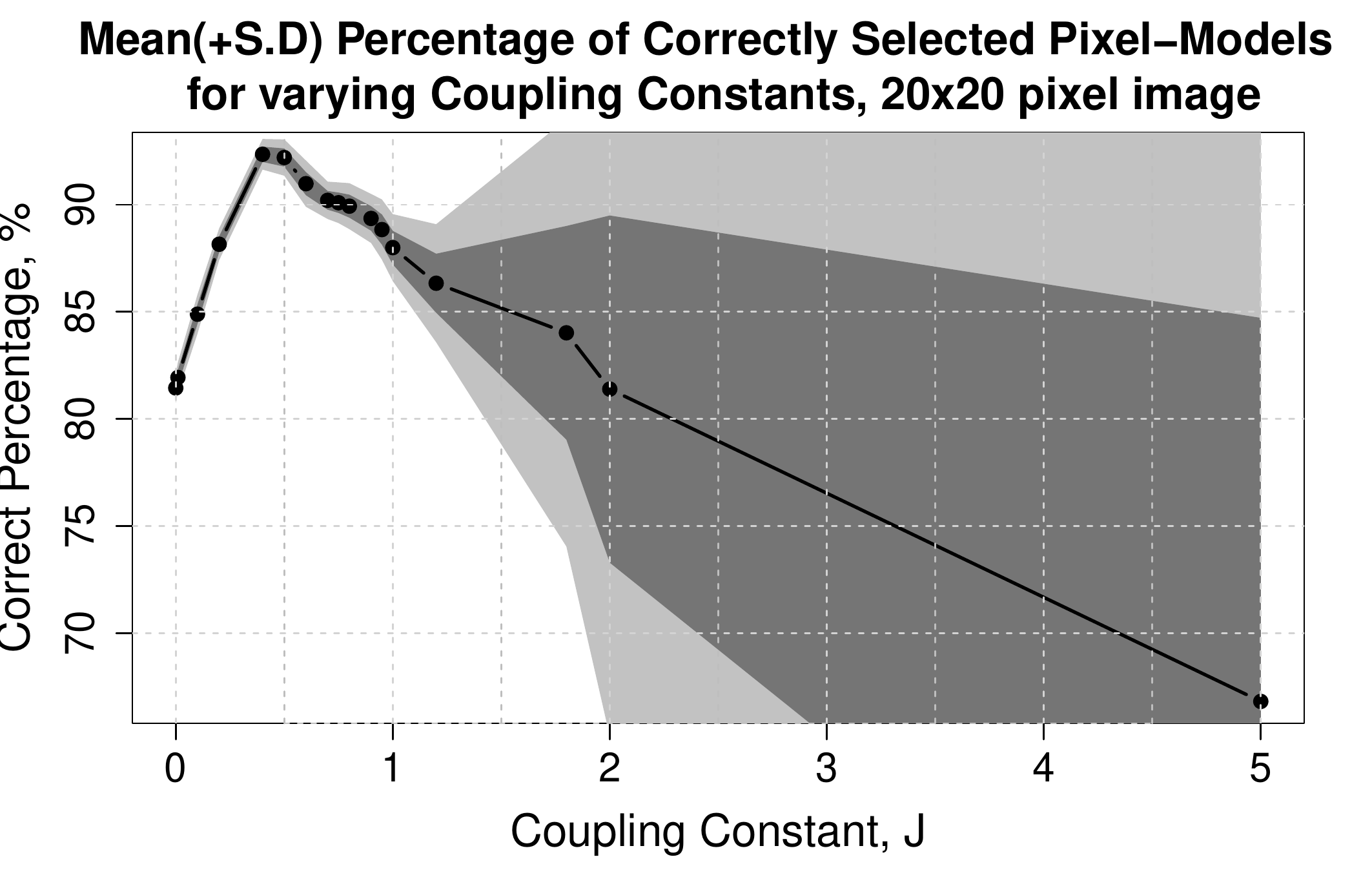}
    \centering
    \caption{Model selection performance for 20$\times$20=400 pixel image.}
    \label{graph:Study1State5_2020}
    \end{subfigure}
    \hfill
    \begin{subfigure}[b]{0.6\textwidth}
      \includegraphics[width =\textwidth]{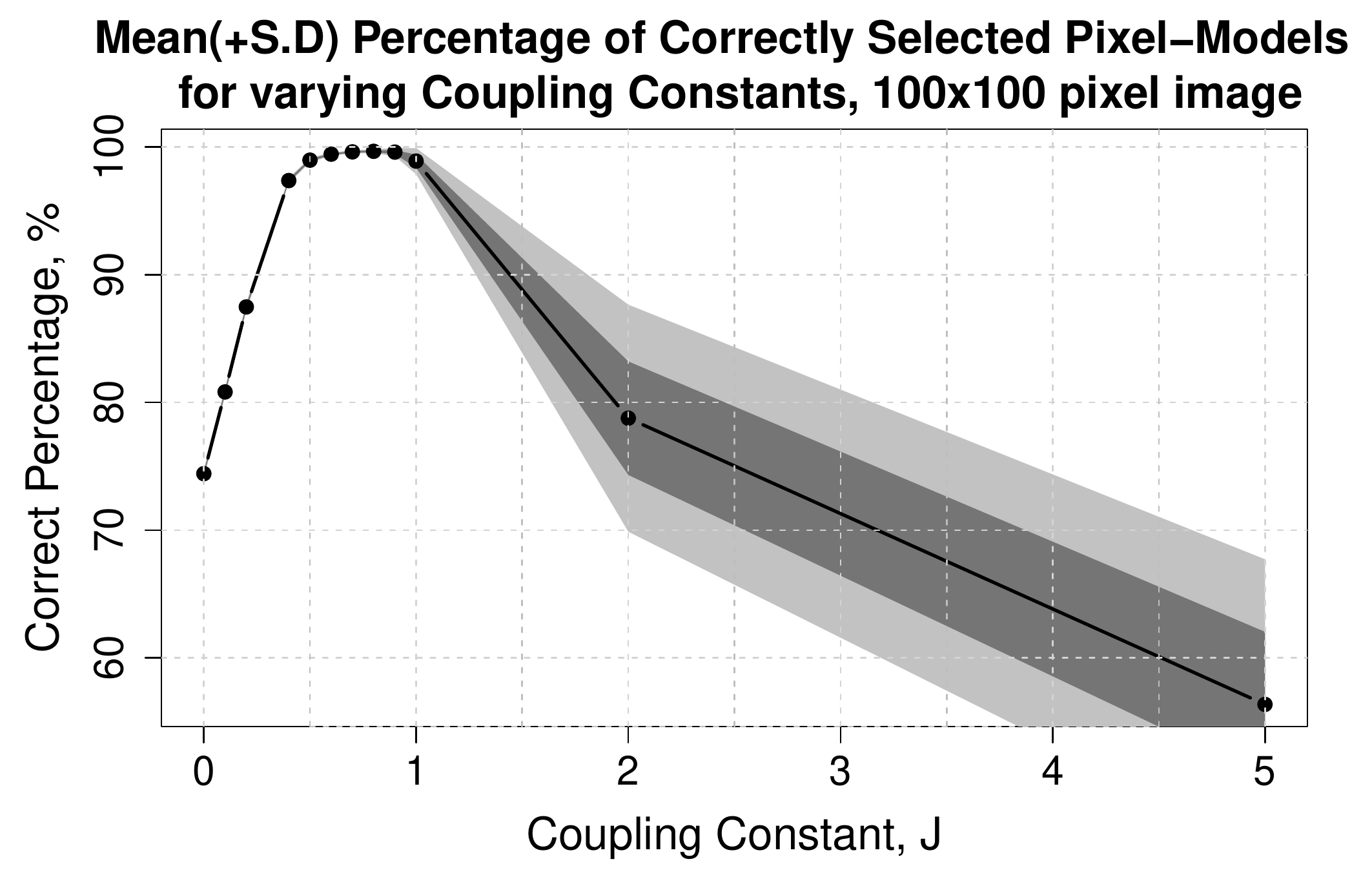}
      \centering
      \caption{Model selection performance for 100$\times$100=10,000 pixel image.}
      \label{graph:Study1State5_100100}
    \end{subfigure}
	\caption{Empirical average, and standard deviations (s.d.), of the percentages(\%) of the total (400 or 10,000) pixels where the correct model order was selected, for different values of the coupling constant $J$. Model selection was done on simulated toy model data using NWPM with an SMC estimator (with $N=50$, and $T =80$) for the marginal likelihood.
    The configuration selected for estimation by looking at the percentage of time each node spent in each model order state.
    Both line plot shows the empirical average of the correct percentage of pixels.
    The lighter and darker region show the $2\times$ s.d. and $1\times$ s.d. error bands, respectively.}
 \label{graph:Study1State5}
\end{figure*}

It is evident from looking at Figure \ref{graph:Study1State5_2020} the performance of the algorithm peaks for coupling constant $J=0.4$.
This is smaller than the critical threshold $J_{\text{critical}}$, as discussed in Section \ref{sec:potts-model}.
This is noteworthy, as there is typically a sharp decrease in speed of mixing of the Markov chain above this critical value. 
In fact, the graph shows that the variance of the percentage of correctly selected model orders increases with the coupling constant; Importantly, there is a jump in the rate of increase between $J=1$ and $J=1.5$.

Given that the size of measured PET data will be considerably bigger, this experiment was repeated for a larger Toy Model image.
Additionally, the model order space for compartmental models may be larger; For instance, in this paper we consider up to $M=3-$compartment models. 
Subsequently, a scaled version of the 20$\times$20 image, with dimensions 100$\times$100 =  10,00 pixels and model space $\mathcal{M} = \{\rm{C},\rm{D},\rm{E}\}$.
More specifically, the region corresponding to $R_0$ had mean parameter $\mu_0^{(\rm{C})} = +7$; Similarly, region $R_1$ had  $\mu_0^{(\rm{D})} = 0$ and regions $R_2$ and $R_3$ had $\mu_0^{(\rm{E})} = -7$.
The parameters $\sigma_0^2$ and $\sigma$ were kept as above.
The generated data was the analysed using the same the methods and design.
The results are shown in Figure \ref{graph:Study1State5_100100}.

Notably, we first see that the peak of the graph is flatter in comparison to Figure \ref{graph:Study1State5_2020}.
Secondly, the performance of algorithm is better overall; For instance, the best performance is almost $100\%$ compared to roughly $94\%$ previously.
Similarly, when looking at the performance corresponding to $J=0.4$, it increases from $94\%$ to $97\%$. 
This suggest that, in this context, a larger image size and model order space did not seem to impact the performance negatively.
Similarly, there is evidence that a wider range of coupling constant values could be used in the setting with bigger images and larger model order space, to attain ``optimal'' model selection performance.
A selection of model order outputs, for varying values of $J$, are given in Figure \ref{fig:thre_model_toy}.

When looking at these results, as presented in Figure \ref{graph:Study1State5}, an important question to consider here is how much of the performance difference is due to poor approximation of the posterior by the Markov chain, and how much to the posterior itself.
It is reasonable to argue that below the critical value, reasonably good mixing is obtained and this probably reflects a good approximation of the posterior as indicated by the low variability.
In contrast, above this critical value the results are largely dominated by the poor mixing of the Markov chain over this rather short period.
We also note that the graph shows good performance for even this relatively short chain, for the appropriate coupling constant values.

Based on evaluating the performance for a range of coupling constant and due to the similar qualitative features of the different settings, in the interest of simplicity, for all the following studies we will fix $J=0.4$. This lower value means that the risk of being too close to the critical value $J_{\rm{critical}}$ is avoided.

\subsubsection{Simulation Study 2: Sample Size vs Chain length} \label{sebsection: simulation study 2}

\begin{figure*}
  \centering
  \begin{subfigure}[b]{\textwidth}
	\includegraphics[width = \textwidth]{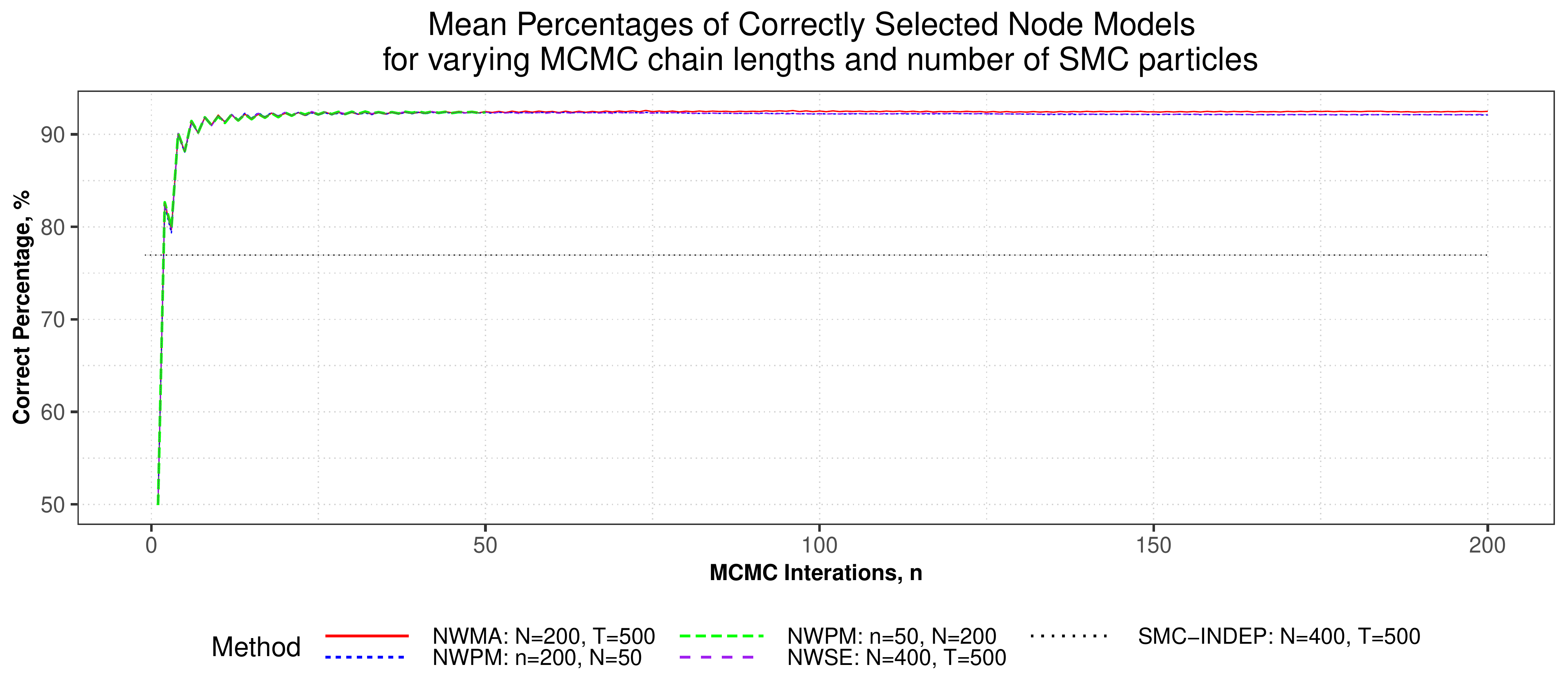} \centering
  \caption{Mean percentages at each graphical (MCMC) iteration.}
  \label{graph:Study2State5_iter}
  \end{subfigure}
  \hfill
  \begin{subfigure}[b]{\textwidth}
    \includegraphics[width = \textwidth]{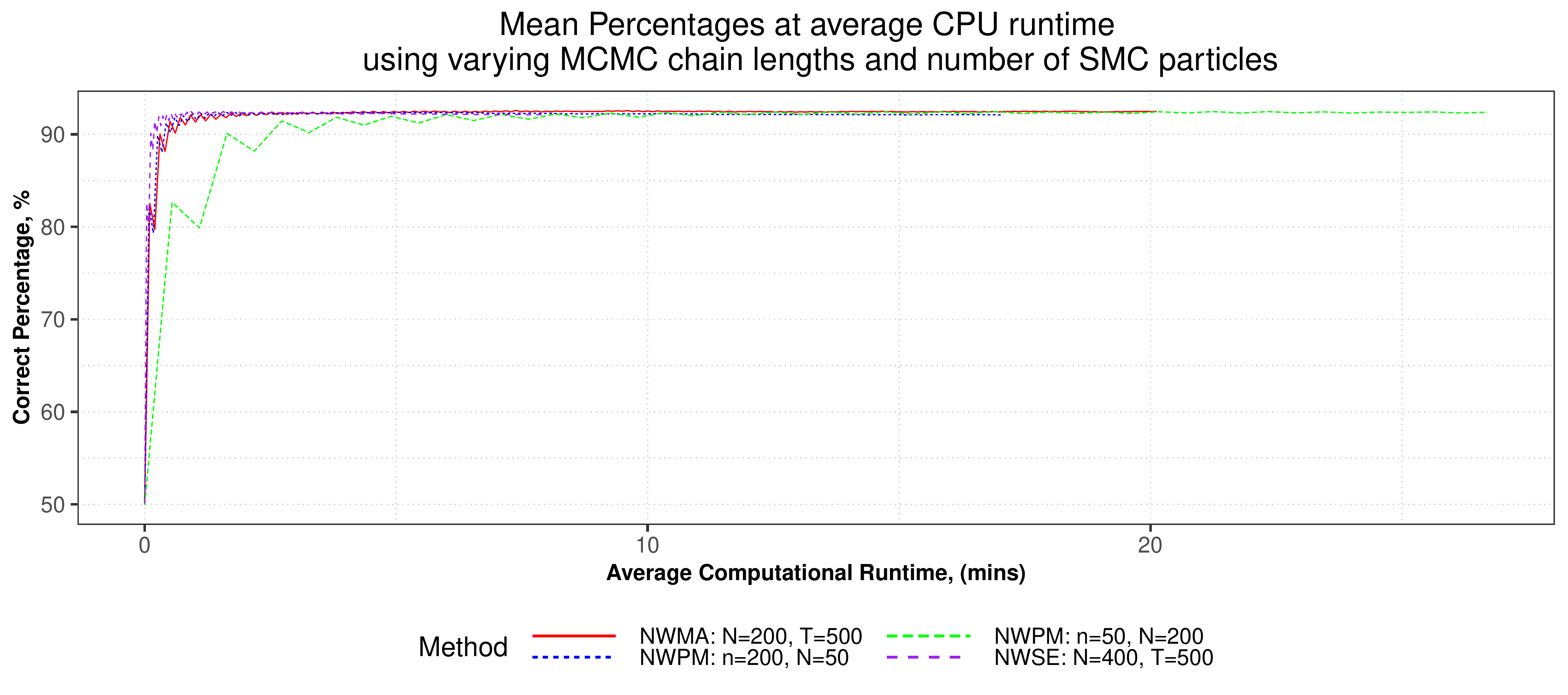} \centering
    \caption{Mean computational runtime (averaged over each graphical iteration).}
    \label{graph:Study2State5_comp}
  \end{subfigure}
    \caption{Average percentages (\%) of the whole image ($20 \times 20 = 400$ pixels, Toy Model) where the correct model order was selected at each iteration (shown in part (a)) of the MCMC chain, using NWPM, NWSE and NWMA for varying Markov Chain length, $n,$ and number of particles, $N,$ in the SMC sampler. The dashed line shows the average percentage when using spatially independent SMC model selection.
    The cumulative average over the graphical iterations $i=1, \dots, n$ is shown.
    Average computational runtimes are shown is shown part (b).
}
\label{graph:Study2State5}
\end{figure*}

Next, the computational trade--off between the number of particles used in the SMC sampler and the length of the pseudo-marginal MH Markov Chain was investigated.
The same design as the study in Section \ref{subsection: simulation study 1} was used, with the exception of the Markov Chain length varying from $n=50$ to $n=200$, and the number of particles varying between $N=50$ and $N=200$.
Here, we investigate the 20$\times$20=400 pixel image.

Additionally, spatially independent model selection, using an SMC sampler to estimate the model evidence was also investigated.
The NWSE approximation  was also used to analyse the image.
Similarly, the NWMA algorithm, as described above, with updates to the marginal likelihood estimates every $\kappa=10$ graphical iterations was used.
For this variant, an SMC sampler with $N=200,T=500$ was used.
For these two, less resource--intensive, Markov chain methods (NWMA and NWSE), $n=200$ graphical iterations were used.

Model selection was carried out in the same manner as described above (i.e. the modal model), and the percentage of correctly selected nodes was used as the metric of performance.
This was done for 100 replicates, for each method used.
Results are shown in Figure \ref{graph:Study2State5}.

These results suggest that, at least in this simple setting, the algorithm is not particularly sensitive to the allocation of computational resources.
Importantly, we can see that the selected graph model/configuration stabilises fairly quickly for all the combination of tuning parameters. 
In particular, looking at the longer $n=200$ chain in Figure \ref{graph:Study2State5_iter}, it is evident that the convergence is to around $92\%$ of nodes having the correct model order selected when using the NWPM algorithm.
This is considerably higher than the performance of the spatially independent SMC sampler, which was $79\%$; confirming that exploiting spatial structure can significantly improve estimation in this type of model.
We also observe that the NWSE approximation method and the simple NWMA algorithm achieved almost identical results to the full NWPM algorithm: suggesting that, in this simple setting, there is enough signal to allow for the use of the approximation to achieve very similar results, with very little additional computational cost.
Intermediate chain lengths not shown in the graph, i.e. $n=75$ and $N=134$, also produced similar results.

Figure \ref{graph:Study2State5_comp} shows the average computational runtime for each method. 
The runtimes were computed by considering the average CPU time per iteration:
That is, the total run time for each trail was recorded and then averaged over the total graphical iteration.
We can draw similar conclusions as above, with the exception that the NWPM method with $N$ = 200 particles seemed to take a longer time (in minutes) to reach stability.

The computational runtime for the NWMA method, for varying values of $\kappa$, was also studied: Results are shown in Figure \ref{graph:NWMA_kappa_comp}. As expected, the total runtime increase for smaller values of $\kappa$; However, a higher refresh rate seems to slightly slow down (in minutes) how quickly the method attains its stable state.    

Finally, the long term behaviour for the NWSE and NWMA MCMC chains was also explored and results (i.e. that much longer chains lead to very similar conclusions) are shown in Appendix \ref{app:Mcmc-traces-long}, Figure \ref{graph:long_run_toy}. 
Additionally, Figure \ref{graph:Study2State5_zoom} shows an enhanced version of Figure \ref{graph:Study2State5} --- allowing for better discernment of the small differences in performance. 

\subsection{Simulated PET Data Studies}

Before we investigate the performance of the proposed algorithms on measured PET data, we verify and investigate tuning factors when applied to simulated PET data.

\subsubsection{Preliminaries}

Consider a time series from a single voxel site of a PET image, denoted $y=(y_1, \dots, y_k)^{\top}$.
Recall the $M-$compartmental model for PET data (see \eqref{eq:comp_signal} and \eqref{eq:comp_model} , Section \ref{PET}).
In particular, note that the latent parameters $\phi_{1:M} = (\phi_1, \dots, \phi_M)$ and $\vartheta_{1:M} = (\vartheta_1, \dots, \vartheta_M)$ determined the dynamics of this model.

For the PET simulation study in this section, we simulate data from this collection of models with normally distributed errors as described in Section \ref{PET}.
For simplicity, we simulate data using a common parameter vector for every voxel associated with a given model order.
These parameter values were taken from \cite{Jones1994} (and have subsequently been used for simulation studies by \cite{Gunn2002,Peng2008,Jiang2009} and \cite{Zhou2013}). 
For the $M=1$ and $M=2$ compartmental models we use $\phi_1 = 4.9 \times 10^{-3} s^{-1}$, $\phi_2= 1.8 \times 10^{-3}s^{-1}$, $\vartheta_1 = 5 \times 10^{-4} s^{-1}$ and $\vartheta_2 = 0.011 s^{-1}$; For the $M=3$--compartmental model we use $\phi_{1:3} = (4.4 \times 10^{-3}s^{-1}, 1 \times 10^{-4}s^{-1}, 1.4 \times 10^{-3}s^{-1})$ and $\vartheta_{1:3} = (4.5 \times 10^{-4}s^{-1}, 2.7\times10^{-3}s^{-1}, 1 \times 10^{-2}s^{-1})$.
In each of these cases the volume of distributions is roughly $V_{\mathrm{D}} \approx 10$.

Synthetic data was simulated using the same plasma input function and integration periods as in the measured PET data set studied in Section \ref{sec:real-pet-data-analysis} and normally-distributed noise, with zero mean, was added.
The variance was such that it was proportional to the true time activities divided by the length of the frame duration; and scaled such that the highest variance in the sequence is equal to the noise level.
A noise level of $0.5$, which lies within the representative range (between $0.01$ and $1.28$) of real data \citep{Jiang2009}, was used when simulating the 2-D PET image.

We employed vague uniform priors over $\phi$ and $\vartheta$, which were constrained to lie within $[10^{-5}, 10^{-1}]$ and $[10^{-4},10^{-1}]$, respectively.
These ranges are based on pilot and previous studies, and ensure that the parameters are physiologically meaningful \citep{Cunningham1993}.

The precision parameter of the measure noise is denoted $\lambda = 1/\sigma^2$. A gamma distribution, with both parameters equal to $10^{-3}$, was used as the prior for $\lambda$ -- a proper approximation to the improper Jeffrey's prior.

Details of the relevant distributions can be found in Appendix~\ref{App:Model_Equations}.

\paragraph{PET Simulation Ground Truth}
As before, the ground truth of the model order configuration were based on spatial structure pattern of Ground Truth Configuration, shown in Figure \ref{fig:GTconfig1}.
In this PET simulation study, a $20 \times 20$ Potts configuration with model order space $\mathcal{M} = \{1,2,3\}$, representing the number of compartments was generated.
Region $R_0$ corresponds to model order $2$; region $R_2$ and $R_3$ to model order $1$ and region $R_1$ to model order $3$.

\subsubsection{Simulation Study 1: Algorithm Comparison}

Pilot studies were initially carried out to evaluate the model selection performance and estimator variance of the SMC sampler.
For the variance of the estimates, on agreement with numerical studies by \cite{Zhou2016}: the numerical study showed that, there was a decrease linearly for higher values of $N$ and similarly in $T$.
However, beyond a certain threshold, the reduction were no longer useful.
Similar results were seen in pilot studies for model selection performance, for 1-D data simulated from just the $2-$compartment model with noise level $0.5$, performance could not be increased beyond $65\%$ regardless of any increase in tuning parameter values.

Next, the simulated 2-D PET image was analysed using the NWPM algorithm, and the effect of the pseudo-marginal MH chain was investigated.
Based on the above pilot study, an SMC sampler with the Prior 5 annealing scheme, $T=400$ intermediate distributions, was used as the estimator of the normalising constants at all pixels for all models.
These values, and the values of the number of particles $N$ in the design below, were chosen as they allow for the variance of the likelihood estimates to be within a suitable range to allow for optimal scaling \citep{Doucet2015}.
Chain lengths of $n=50$, $n=75$, $n=100$ and $n=200$ were investigated, with particles $N=200$, $N=134$, $N=100$ and $N=50$, respectively.

These, relatively smaller, graphical iteration lengths were chosen due to the computational overhead of this algorithm in this resource intensive setting.
Note that, since model selection here uses the marginal modal state from a space of just $3$ model orders --- these shorter chains can still produce reliable results. 

Additionally, the image was also analysed using spatially independent SMC sampler and the NWSE approximation; using an SMC sampler with $N=400$ particles and $T=600$.
The NWMA algorithm, with SMC sampler using $N=200$ and $T=600$, at every $\kappa=50$ iterations, was also evaluated.  
Given their significantly reduced computational costs, for the two pseudo-marginal methods, $n=500$ iterations were used.

Following Algorithm \ref{algthm: Node-wise PM}, all the chains where initialised using the Gibbs sampler to target the prior distribution.
$30$ independent replicates of each algorithm were used to obtain the results shown in Figure \ref{graph:PETSimStudy} and Table \ref{tab:1} (performance for the intermediate ranges of tuning parameters of the NWPM algorithm is shown in Appendix \ref{app:Mcmc-traces-long}, Figure \ref{graph:PETSimStudyInterRange}).

\begin{figure*}[h!]
  \centering
  \begin{subfigure}[b]{\textwidth}
	\includegraphics[width = \textwidth]{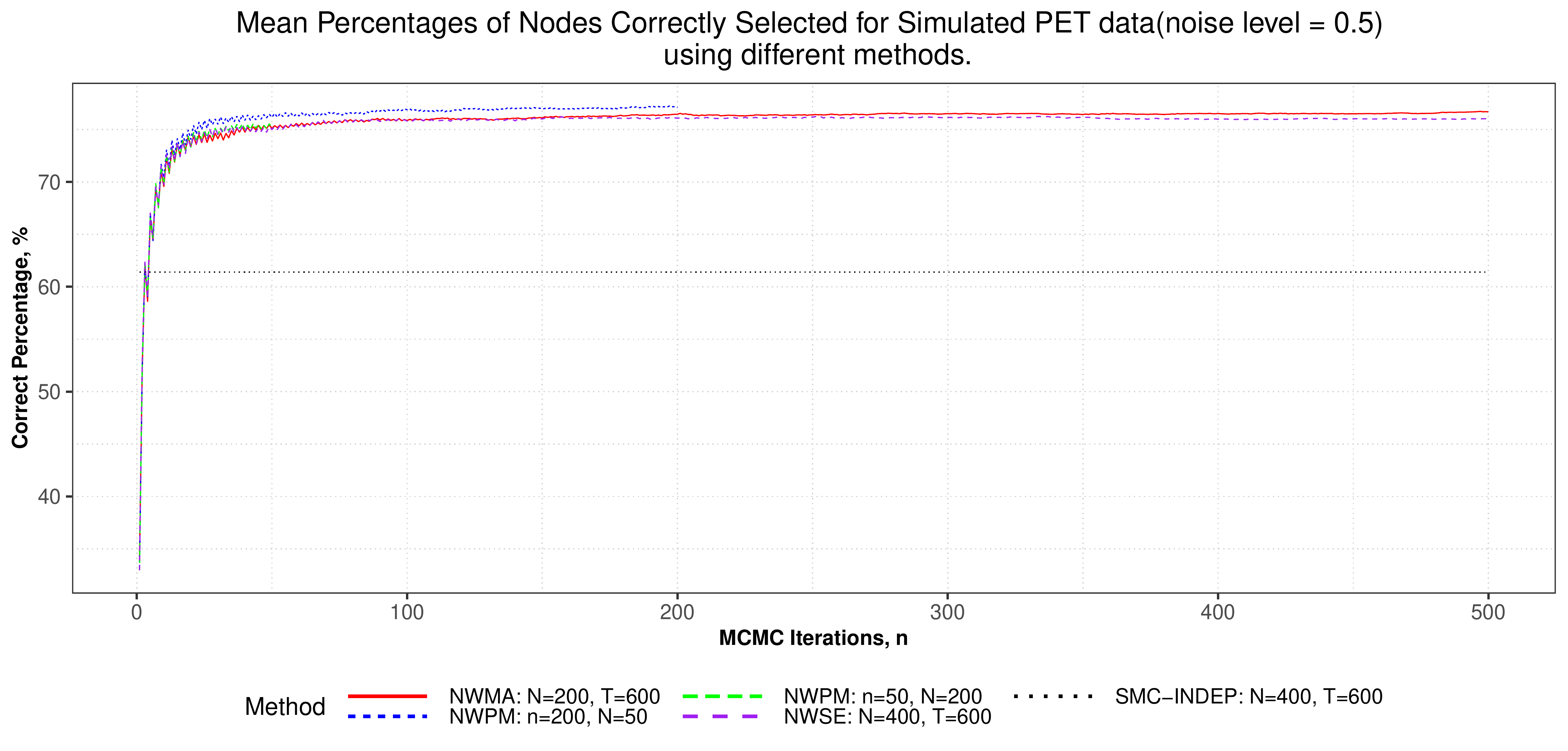}
  \caption{Mean percentages at each graphical (MCMC) iteration.}
\label{graph:PETSimStudy_iter}
  \end{subfigure}
  \begin{subfigure}[b]{\textwidth}
    \includegraphics[width = \textwidth]{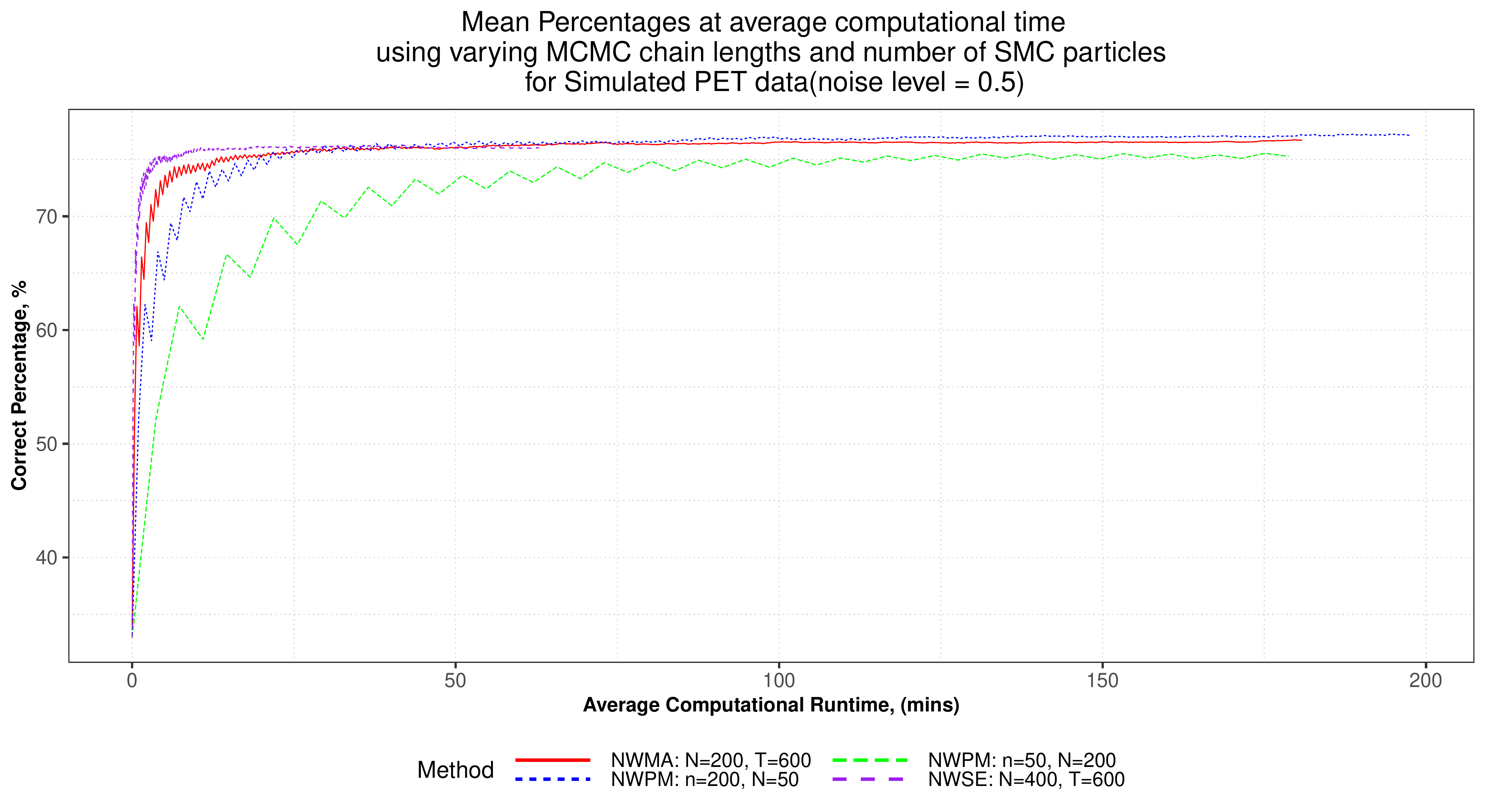}
    \caption{Mean computational runtime (averaged over each graphical iteration).}
\label{graph:PETSimStudy_comp}
  \end{subfigure}
\caption{Average percentages(\%) of the whole image ($20 \times 20 = 400$ pixels, simulated PET data) where the correct model order was selected at each iteration of the pseudo-marginal MH chain, using: NWPM for varying Markov Chain length, $n,$ and number of particles, $N,$ in the SMC sampler; NWSE with $N=400$ particles and $T=600$ distributions and NWMA using SMC sampler with $N=200$ and $T=600$ at every $\kappa=50$. The dashed line shows the average percentage when using spatially independent SMC($N=400,T=600$) model selection.}
  \label{graph:PETSimStudy}
\end{figure*}

Looking at Figure \ref{graph:PETSimStudy_iter},  the results show notable improvements in model selection performance when using the different variants of the NWPM algorithms compared to spatially independent model selection using the SMC sampler.
Analysing the simulated 2-D PET image using the spatially independent method gave $61.4\%$ pixels correctly selected, compared to the NWPM algorithms all giving around $76\%$, even for as short as $n=50$ graphical iterations.
On average, this is a $23.4\%$ increase in the number of pixels selected correctly when using the proposed NWPM approaches compared to the spatially independent SMC sampler method.
On large data-sets, like PET images, this is a considerable number of pixels.
Furthermore, since this is spatially dependent model selection, the improvement in correctly selected pixels when compared to SMC independent will be towards revealing the underlying spatial structures.
These results suggest that incorporating spatial dependence does results in better model selection performance.

Figure \ref{graph:PETSimStudy} also shows that the performance of the NWSE approximation and the NWMA variant both provide similar performance to the full NWPM algorithm.
Pragmatically, this is very promising as it suggests that even with an approximation current state of the art methods can be outperformed with very little additional costs.
Alternatively, rather than using an approximation, algorithms based on the multiple augmented state space with significantly reduced computational overhead can be used.
For example, the simple NWMA algorithm studied here. 
Similar to the toy model, there does not seem to be any notable difference when using higher chain length and lower particles.
Additionally, there is only a small increase in performance when using more a computationally intensive algorithm, on average.
That is, the full NWPM performs slightly better than the NWMA variant; which in turn also has a minute increase in performance compared to the NWSE approximation.

Additionally, for the chains where higher iterations were used, we can see that this increase in results was maintained.
To verify long term behaviour of the algorithms: long-run MCMC chains, of the less expensive NWSE and NWMA algorithms, are shown in Appendix \ref{app:Mcmc-traces-long}, Figure \ref{graph:long_run_PET}. 
As before, an enhanced version of Figure \ref{graph:PETSimStudy_iter} is shown in Figure \ref{graph:PETSimStudy_zoom}.

As before, Figure \ref{graph:PETSimStudy_comp} shows the average computational runtime for each method. The runtimes
were computed in the same manner as described in Section \ref{sebsection: simulation study 2}. Similarly, as seen in the aforementioned section, the NWPM method with n = 200 seems to be the slowest in regards to
mixing (in minutes).

Next, Table \ref{tab:1} shows the average mean squared error (MSE) of the $V_\text{D}$ estimates for each variant of computational method.
The quantity was calculated by taking the average MSE over all the pixels in the 2-D PET image, and then over all the replicates (the empirical variance of these quantities are shown in the parenthesis).
Although the non-spatial SMC sampler method has the highest empirical MSE,
the NWPM algorithms, including all chain lengths and the SE variant, produce a noticeably smaller MSE compared to the spatially independent SMC method, and
there is a small improvement when using the full pseudo-marginal NWPM, compared to its approximation, these differences all fall within the range that could be plausibly explained by random fluctuations and it is difficult to draw any meaningful conclusion.

\begin{table}[h!]
  \centering
  \caption{Average RMSE (and s.e.) of $V_D$ estimates for simulated PET data.}
  \label{tab:1}       
  \begin{tabular}{lll}
    \hline\noalign{\smallskip}
    Method & Model Averaging $V_\text{D}$ & Posterior $V_\text{D}$  \\
    \noalign{\smallskip}\hline\noalign{\smallskip}
    NWPM $n=$50 & 0.1319 (0.0061) & 0.1321 (0.0061) \\
    NWPM  $n=$75 & 0.1321 (0.0061) & 0.1323 (0.0061) \\
    NWPM $n=$100 & 0.1320 (0.0061) & 0.1323 (0.0061) \\
    NWMA & 0.1319 (0.0061) & 0.1323 (0.0061) \\
    NWSE & 0.1322 (0.0061) & 0.1324 (0.0061) \\
    INDEP-SMC & 0.1341 (0.0066) & 0.1341 (0.0063) \\
    \noalign{\smallskip}\hline
  \end{tabular}
\end{table}

\subsection{Measured [$^{11}$C]diprenorphine PET image Study} \label{sec:real-pet-data-analysis}
The simulation studies above have verified the effectiveness of the proposed methodology and allowed for the identification of tuning parameters which lead to promising results in settings close to that of interest. We now turn our attention to real data sets.

Analysis of the tracer kinetics of dynamic PET images, that use the [$^{11}$C]-diprenorphine tracer, are generally aimed towards quantification of opioid receptor concentrations in the brain. 
Such studies involve investigation of neurological disease which often involve change in brain receptor density.

Investigation of similar PET data sets, using different statistical approaches, have been previously analysed quantitatively by \cite{Gunn2002}, \cite{Peng2008} and \cite{Jiang2009}; However, these works focused on parameter estimation rather than model selection.
\cite{Zhou2013,Zhou2016} are more recent works that investigated model selection; Albeit, in these works, spatial independence is assumed.  

\subsubsection{PET Data}

A test-retest pair of measured dynamic PET data of the concentration of the tracer [$\vphantom{C}^{11}$C]–diprenorphine in a normal subject, for which an arterial input functions were available, was analysed using the proposed and other methods. This data was initially measured as part of the dynamic PET study in \citet{Hammers2007}.
In that study, the effects of using lower concentration of tracer was investigated. In addition, the impact of using different slow frequency cutoffs within the analysis of the data was also studied.

\paragraph{PET data acquisition}
The subject underwent 95 min dynamic [$^{11}$C]–diprenorphine PET baseline scan.
[$^{11}$C]–dipre\linebreak[1]norphine is a tracer, of the non-selective antagonist, which binds to neural opioid receptors.
The PET scans were acquired in 3D mode on a Siemens/CTI ECAT EXACT3D PET camera.
After image reconstruction, the spatial resolution is approximately 5mm.
Technical details including the data correction and normalisation methods of the ECAT EXACT3D PET camera can be found in \citet{Spinks2000}.
Thirty seconds after the start of the scan, [$^{11}$C]-diprenorphine was injected and flushed through the cannula as a smooth bolus over a total of 30s.
The subject was injected with 185 MBq (Mega Becquerel) of [$^{11}$C]–diprenorphine.
The PET data was reconstructed using the re-projection algorithm \citep{Kinahan1989} with ramp and Colsher filters cutoff at the Nyquist frequency.
Reconstructed voxel size were $2.096\text{mm} \times 2.096\text{mm} \times 2.43\text{mm}$.
Acquisition was performed in listmode (event-by-event) and scans were rebinned into 32 time frames of increasing duration.
The lengths of these periods, in seconds, are: (variable length background time, $3 \times 10$, $7 \times 30$, $12 \times 120$, $6 \times 300$, $75$, $11 \times 120$, $210$, $5 \times 300$, $450$ \text{ and } $2 \times 600$).
Frame-by-frame movement correction was performed on the PET images.
Overall this resulted in images of dimensions/size $128 \times 128 \times 95$ voxels.
This gives a total of $1, 556, 480$ separate times series respectively to be analysed.
In this study, we will analyse a cross-section of each of the sagittal, coronal and transversal planes of the brain --- the total number of times series studied here is, therefore, roughly 11,000.

Details of further PET data pre-processing can be found in \citet[Material and methods]{Hammers2007}.

\paragraph{Decay Correction}
The PET data is not decay corrected and this should be accounted for within the compartmental model.
\citet{Gunn2002} (similarly, \cite{Cunningham1993} for Spectral Analysis), suggest that the slow frequency cutoff be close to the decay constant of the tracer used.
In other words, given that the half life of [$^{11}$C] is 20.4min, its decay constant is $0.0005663$s$^{-1}$;
Thus, the slow frequency cutoff, denoted $\theta_{\text{min}}$, is fixed close to this constant.
\cite{Hammers2007} showed that actual parametric values depended heavily on the cutoff slow frequencies( between 0.0008 s$^{-1}$ and  0.00063 s$^{-1}$). 
Importantly, the prior ranges of $\theta_{1:M}$ should be based upon $\theta_{\text{min}}$, and subsequently this should be taken into consideration when computing the volume of distribution $V_{\mathrm{D}}$.
Following \cite{Zhou2016}, in particular see  \cite{Zhou2014}, we will use the cutoff $0.0007s^{-1}$.

\subsubsection{Data analysis and performance evaluation}\label{sec:data-analys} 

Normally-distributed errors have been found to poorly model data of this sort \citep{Zhou2013}.
Instead we will use the $t-$distribution to model the additive error variable $\epsilon$.
The same gamma prior as used for $\lambda$ in the simulation studies above, will be used for the scale parameter, $\tau$. A uniform distribution over the interval $[0,0.5)$ will be used as the prior for $1/\nu$, allowing the likelihood to vary from having a very heavy tail to being arbitrarily close to normality.

The measured data was analysed using the following methods:
Firstly, SMC sampler estimates of the marginal likelihood of each pixel for the three models were computed.
This strategy assumes spatial independence, so, as before, we will refer to this method as the independent SMC method.
The Prior 5 annealing scheme with $N=300$ particles and $T=500$ distributions was used, these values were based on the studies above and numerical results reported by \citet{Zhou2016}.
Three cross-sectional slices (coronal, transverse and sagittal planes) of the 3-D PET image were analysed, the model order and volume of distribution parametric images are shown in Figure \ref{fig:REAL_PET_SMC}.

Next, the NWPM algorithm was used to analyse the same three cross-sections.
An SMC sampler with $N=300$ particles and $T=500$ distributions was used as the marginal likelihood estimator.
Here, the chain was initialised from the output of the SMC spatially independent model selection above (rather than from the prior distribution, as in the simulation studies above).
A pseudo-marginal MH chain of length $n=75$ was generated.
This chain length was based on viability of computation cost for this relatively large data set, and the simulation studies above.
In addition, the sagittal cross-sectional image was analysed using  a longer $n=200$ iterations, as discussed below.
The results are shown in Figure \ref{fig:REAL_PET_NWPM}.

Finally, Figure \ref{fig:REAL_PET_NWPM_SE} and  \ref{fig:REAL_PET_NWPM_MA} show the model configuration output from analysing the measured PET data using the NWSE approximation and the NWMA variant, respectively.
In this case, an SMC sampler with $N=400$ particles and $T=600$ distributions was used to make the marginal likelihood estimates.
The pseudo-marginal chain was initialised using the output of these estimates, and these initial estimates where re-used within the NWSE algorithm for $n=500$ iterations.
For the NWMA method, the chain was initialised in the exact same manner, but the marginal estimates where updated every $\kappa=100$, for a total of $n=500$ iterations.

For further comparison, the sagittal cross-section was also analysed using  NLS (non-linear least squares; \cite{Zhou2013}).
This is shown in Figure \ref{fig:REALPET_ALGO_COMP}.
The spatially independent SMC sample model selection and NWPM method(with a longer chain length $n=200$) model selection are included to show the difference in the final model order parametric image output.

The volume of distribution images, specifically Figure \ref{fig:REAL_PET_SMC} and Figure \ref{fig:REAL_PET_NWPM}, show very similar performance for all methods when considering inference of $V_\text{D}$.
However, Figure \ref{fig:REALPET_ALGO_COMP} shows clear difference in the model order image when using NWPM compared to spatially independent SMC and NLS.
It is evident that analysis using NWPM reveals more spatial structure: Albeit, using the SMC independent approach does reveals some of the underlying spatial information when compared to NLS; NWPM has a de-noising effect and improves the clarity of the roughly formed clusters seen in the spatially independent SMC method.
In particular, the $M=1$ and $M=2$ compartment models are selected to model most of the structural clusters.
In contrast to this there is a decrease in the number of $M=3$ compartment models selected when compared to the non-spatial SMC method.

Note also that, by looking at Figure \ref{fig:REALPET_ALGO_COMP} we see that the NWPM algorithm converges to a noisy version of the final output image even at low iterations as $n=25$.
Importantly, there seem to be only small changes at the higher iterations, suggesting that the chain reaches a stable output relatively quickly.
Similar behaviour is seen in the simulation studies above, albeit in a smaller, simpler settings.

Furthermore, similar model selection output is seen when using the SE estimate approximation and the multiple augmentation variant, NWMA, of the NWPM algorithm.
Comparing both Figure \ref{fig:REAL_PET_NWPM_SE} and \ref{fig:REAL_PET_NWPM_MA} with Figure \ref{fig:REAL_PET_NWPM}, we see that the mode order outputs are not identical, but qualitatively show many similarities.
In particular, we see that the there is a decrease in the total number of $3$-compartment models selected when using the SE approximation or the NWMA variant, compared to the full NWPM. 
A similar effect is seen when comparing the independent SMC model selection with the NWPM.
Notably, the NWSE and NWMA model selection outputs are almost identical, differing only at a very small number of pixels.

\section{Conclusion}

In this work we have illustrated a novel computational method for effectively incorporating spatial dependence when performing model selection at each location within a graph.
This approach extends the pseudo-marginal method, in a number of directions, within the context that it has been developed and evaluated in.
By exploiting the structure of the problem at hand, the developed methodology allows for considerable flexibility in the updating of state variables and marginal likelihood estimates relative to generic pseudo-marginal algorithms.
The empirical study suggests that this approach can yield better inference than methods which impose assumptions of full spatial independence and, in particular, can reveal clearer spatial structure in the image of underlying model orders.
As with pseudo-marginal algorithms, essentially any unbiased marginal likelihood estimators used in non-spatial analysis can be used within this algorithm.

This methodology was motivated by considering the problem of model selection in PET images.
In such a setting both the amount of data and the complexity of the models lead to substantial computational requirements.
The images of model order provided by this approach supplement the information provided by volume of distribution and similar macroparameters in PET images.
For instance, Figure \ref{fig:REALPET_ALGO_COMP} (part d.) could be used to make the interpretation that the signal from some parts of the PET image require a higher model order compared to other parts .
Specifically, this is apparent when comparing the posterior region of the brain image (where 2-compartmental models were selected) to the anterior region of the brain (where 1-compartmental models were selected).

In order to reduce the computational cost, a multiple augmentation variant of the pseudo-marginal algorithm is introduced and a more extreme approximate form in which each marginal likelihood is estimated only once. The performance of these methods in the examples explored here is very encouraging:
it suggests that where non-spatial analysis has been performed, spatial dependence can be incorporated with the existing estimates very easily and with little computational costs.
Doing so carefully can result in similar performance to the full NWPM algorithm.

In summary, the NWPM algorithm, together with its approximations and extensions, enables a intuitive, natural approach to incorporating spatial dependence in realistic and complex settings; such as analysis of PET images.
In particular, the methods extensions and approximations means that the subsequent performance and results of incorporating spatial information is very accessible and readily applicable.

\begin{figure*}[htbp!]
	\includegraphics[width = 0.95\textwidth]{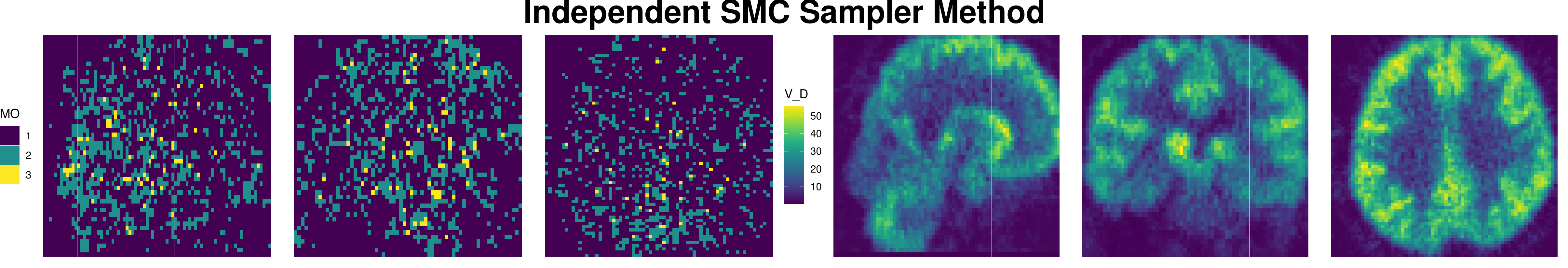} \centering
  \caption{Model order(MO) and volume of distribution(V\_D) parametric image of measured PET data (sagittal(left), coronal(right) and transverse(middle) cross-sections), using spatially independent SMC sampler($N=300, T=500$) model selection.}
  \label{fig:REAL_PET_SMC}       
\end{figure*}

\begin{figure*}[htbp!]
	\includegraphics[width = 0.95\textwidth]{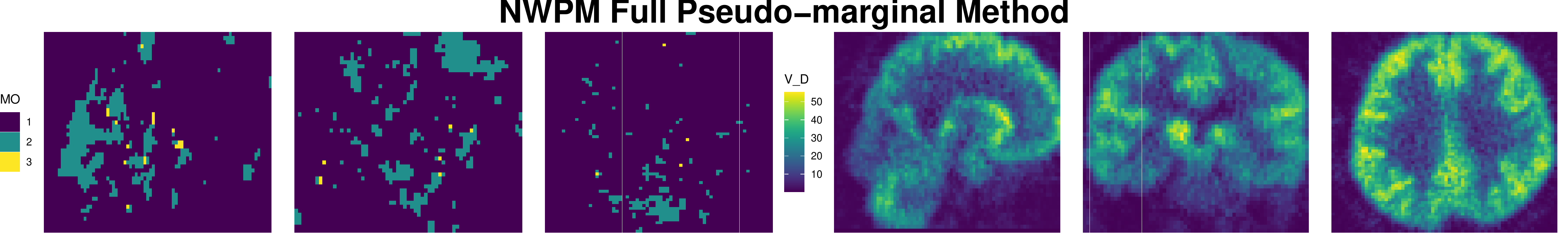} \centering
  \caption{Model order(MO) and volume of distribution(V\_D) parametric image of measured PET data (sagittal(left), coronal(middle) and transverse(right) cross-sections), using the NWPM($n=75$) algorithm for model selection with spatial dependence.
    An SMC sampler($N=300, T=500$) was used for the marginal likelihood estimator.}
  \label{fig:REAL_PET_NWPM}       
\end{figure*}

\begin{figure*}[htbp!]
	\includegraphics[width = 0.95\textwidth]{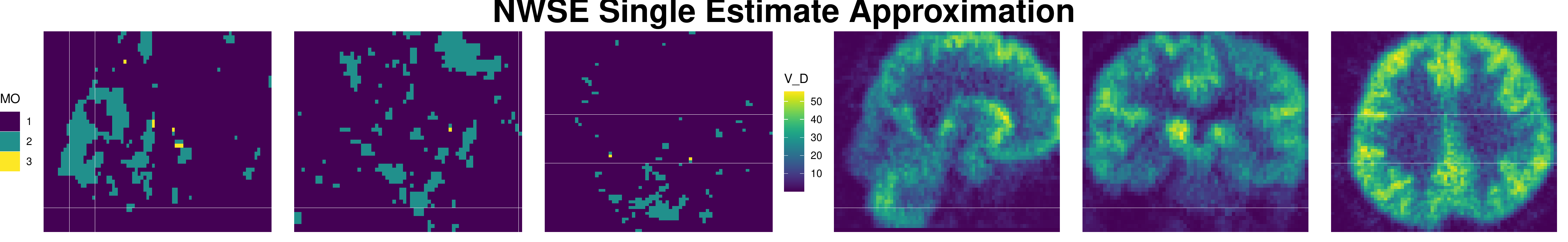} \centering
  \caption{Model order(MO) and volume of distribution(V\_D) parametric image of measured PET data (sagittal(left), coronal(middle) and transverse(right) cross-sections), using the NWSE approximation method($n=500$) for model selection with spatial dependence.
    An SMC sampler($N=400, T=600$) was used for the marginal likelihood estimator.}
  \label{fig:REAL_PET_NWPM_SE}       
\end{figure*}

\begin{figure*}[htbp!]
	\includegraphics[width = 0.95\textwidth]{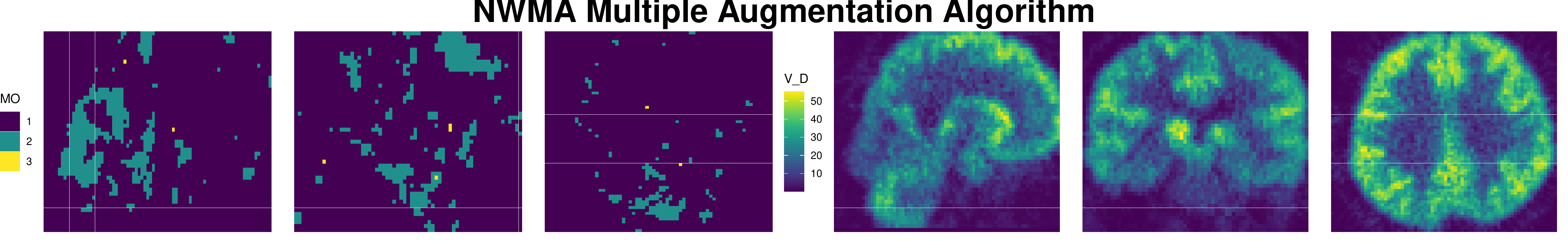} \centering
  \caption{Model order(MO) and volume of distribution(V\_D) parametric image of measured PET data (sagittal(left), coronal(middle) and transverse(right) cross-sections), using the NWMA method($n=500, \kappa=100$) for model selection with spatial dependence.
    An SMC sampler($N=400, T=600$) was used for the marginal likelihood estimator.}
  \label{fig:REAL_PET_NWPM_MA}       
\end{figure*}

\begin{figure*}
	\includegraphics[width = 0.6\textwidth]{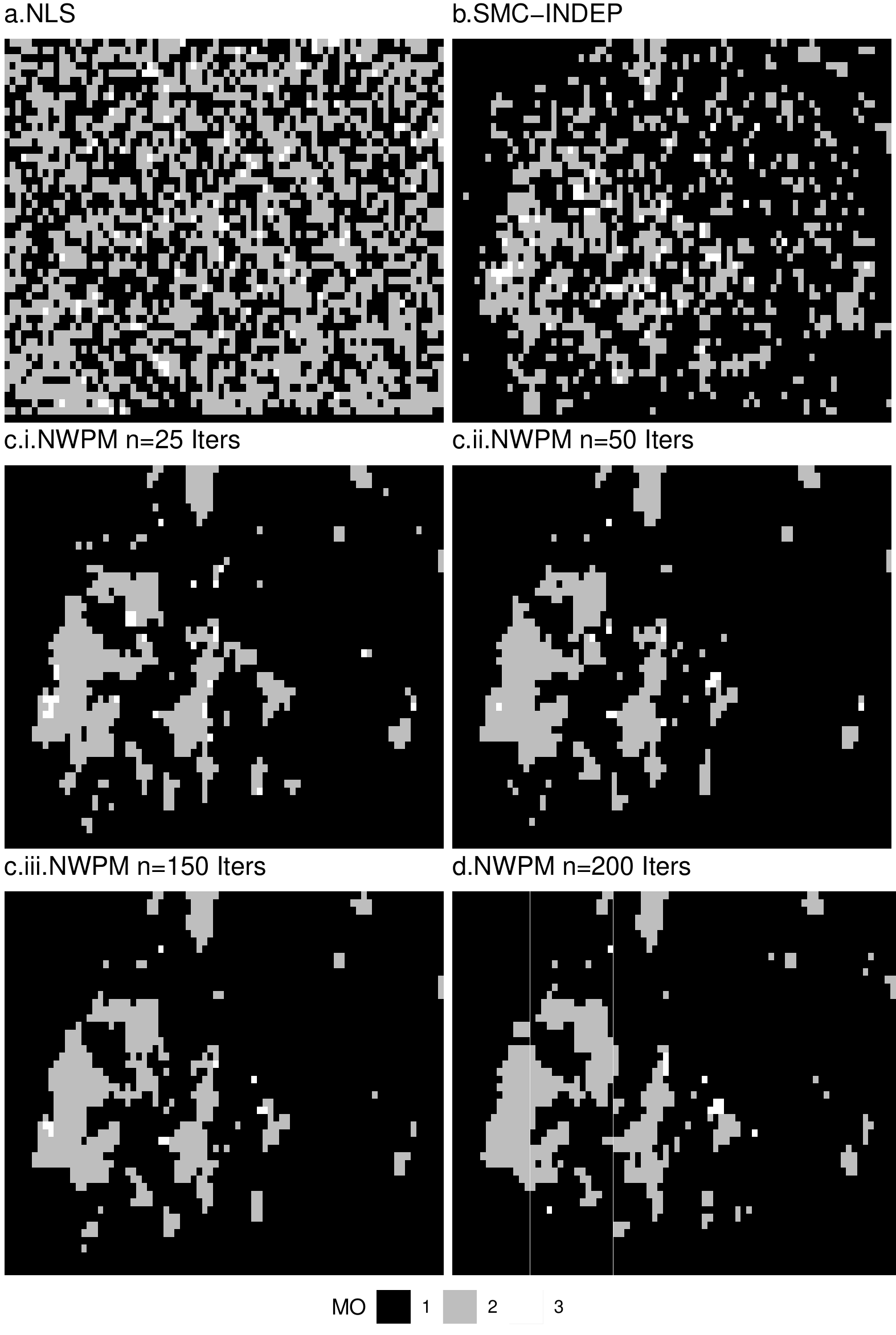} \centering
  \caption{Model selection for measured PET images, using: (a.) spatially independent NLS and (b.) SMC; (d.) NWPM with spatial dependence incorporated using Potts model. Part (c.) shows the progress of the NWPM chain. Model orders (MO) shown.}
  \label{fig:REALPET_ALGO_COMP}       
\end{figure*}

\paragraph{Acknowledgments}
This work was supported by funding from the Engineering and Physical Sciences Research Council under grants EP/\linebreak[1]R034710/1 and EP/\linebreak[1]T004134/1, the Lloyd’s Register Foundation Programme on Data-centric Engineering at the Alan Turing Institute and EPSRC Studentship Project: 1943006.

We thank Prof. Alexander Hammers and Prof. Sir John Aston for providing us with the PET data set used in this study.

\bibliographystyle{unsrtnat}
\bibliography{references}  

\begin{appendices}

\section{PET Model}\label{App:Model_Equations}


The posterior distribution for the PET compartmental models with normally-distributed errors is as follows:
Let 
$$\iota_{i}(\phi,\vartheta) := \frac{C_{\mathrm{T}}(t_i;\phi,\vartheta)}{t_{i}-t_{i-1}},$$
the likelihood for data $y=(y_1, \dots, y_k)$ can be written:
$$\prod_{i=1}^{k}\sqrt{\frac{\lambda}{2\pi \iota_{i}(\phi,\vartheta)}}\exp\left\{ -\frac{\lambda}{2\iota_{i}(\phi,\vartheta)} (y_i-C_{\mathrm{T}}(t_i;\phi,\vartheta))^2 \right\}.
 $$ 
The prior (joint) distributions over $\phi,$ $\vartheta$ and the precision parameter $\lambda = \frac{1}{\sigma_2}$ is given by:
 $$\lambda^{\alpha-1}e^{-\beta\lambda}\prod_{j=1}^{M}I_{[\phi^{a}_j,\phi^{b}_j]} I_{[\vartheta^{a}_j,\vartheta^{b}_j]}$$
 Here $\alpha = \beta = 10^{-3},$ the parameters of the prior distribution of $\lambda$. And $\phi_j^a$ and $\phi_j^a$ are the lower and upper bounds of the truncation interval of parameters $\phi_j$ and corresponding notation is used for $\theta_j$. 

 For the $t-$distributed errors, the observation $y_i$ has a $t$ distribution with location $C_{\mathrm{T}}(t_i)$, scale $\frac{t_i-t_{i-1}}{C_{\mathrm{T}}(t_i)}\tau$, and degrees of freedom $\nu$. The likelihood is, 
 \begin{align*}
   &\prod_{i=1}^{k} \left\{ \frac{\Gamma\left( \frac{\nu+1}{2} \right)}{\Gamma(\frac{\nu}{2})} \left( \frac{\tau}{\iota_i(\phi,\vartheta)\pi \nu} \right)^{\frac{1}{2}} \times \right. \\
   &\left.\left( 1 + \frac{\tau}{\nu \iota_i(\phi,\vartheta)} (y_i - C_{\mathrm{T}}(t_i;\phi,\vartheta))^2\right) ^{-\frac{\nu+1}{2}}\right\}.
 \end{align*}
The prior density is given by:
$$\tau^{\alpha-1}e^{-\beta \tau} \times \frac{1}{\nu^2} \times I_{[a,b]}\left( \frac{1}{\nu} \right) \prod_{i=1}^{M}I_{[\phi_i^a,\phi_i^b]}(\phi_i)I_{[\vartheta_i^{a},\vartheta_i^{b}]}(\vartheta_i),$$

where $\alpha=\beta = 10^{-3}$, the parameters of prior distribution of $\tau;$ $a=0$ and $b=0.5$. 
\clearpage
\onecolumn

\section{MCMC Traces for Studies and Long-run chains}\label{app:Mcmc-traces-long} 
\FloatBarrier
\begin{figure*}[htbp!]
	\includegraphics[width = 0.95 \textwidth]{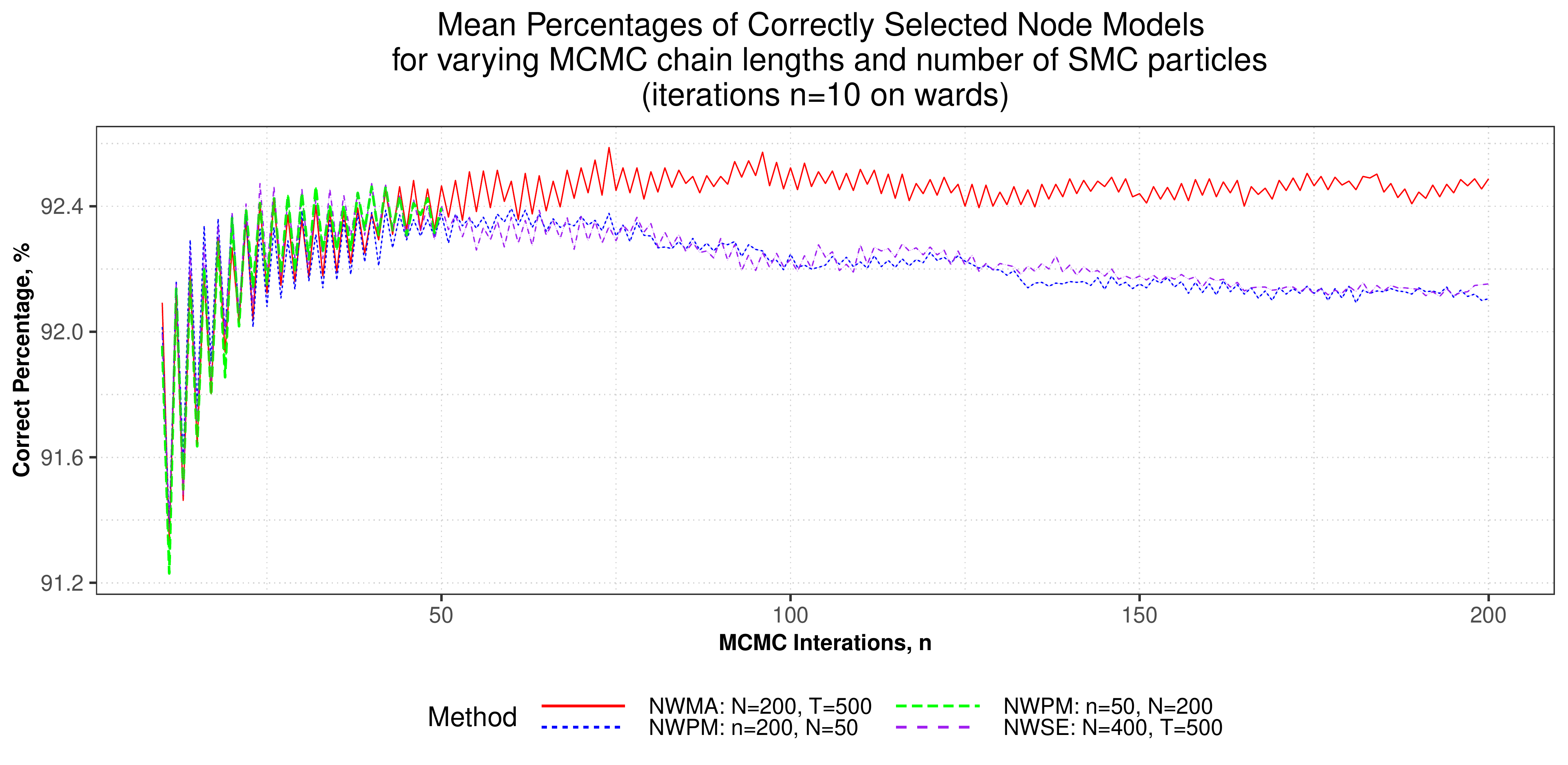} \centering
	\caption{Average percentages(\%) of the whole toy model image ($20 \times 20 = 400$ pixels) where the correct model order was selected at each iteration of the MCMC chain.
    MCMC traces, of each method, for graphical iterations $n=10$ on wards are shown.
  }
  \label{graph:Study2State5_zoom}
\end{figure*}

\begin{figure*}[htbp!]
	\includegraphics[width = 0.95\textwidth]{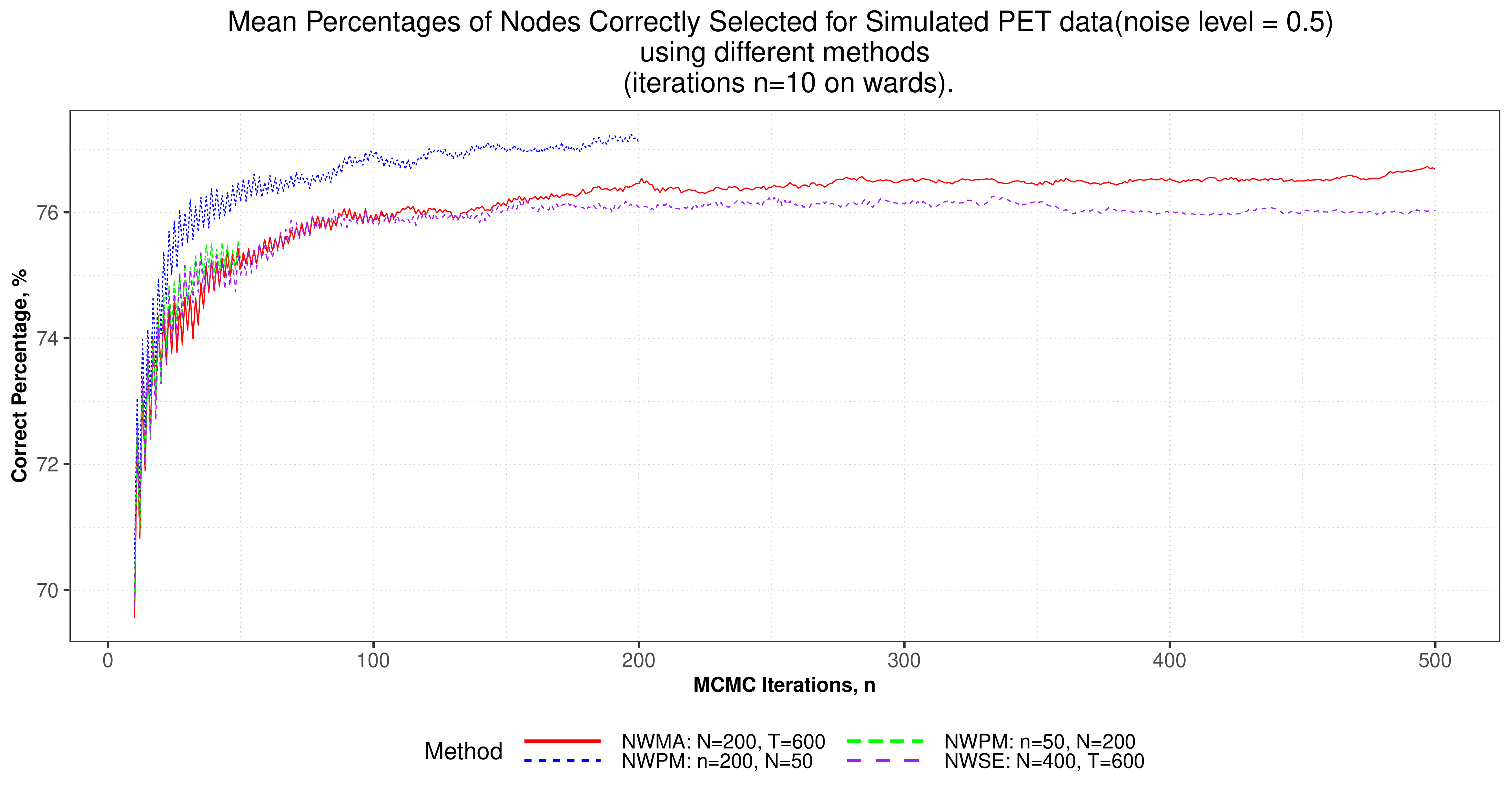} \centering
	\caption{Average percentages(\%) of the whole image ($20 \times 20 = 400$ pixels) where the correct model order was selected at each iteration of the pseudo-marginal MH chain.
    MCMC traces, of each method, for graphical iterations $n=10$ and on wards are shown.
  }
  \label{graph:PETSimStudy_zoom}
\end{figure*}

\begin{figure}[htbp!]
	\includegraphics[width = \textwidth]{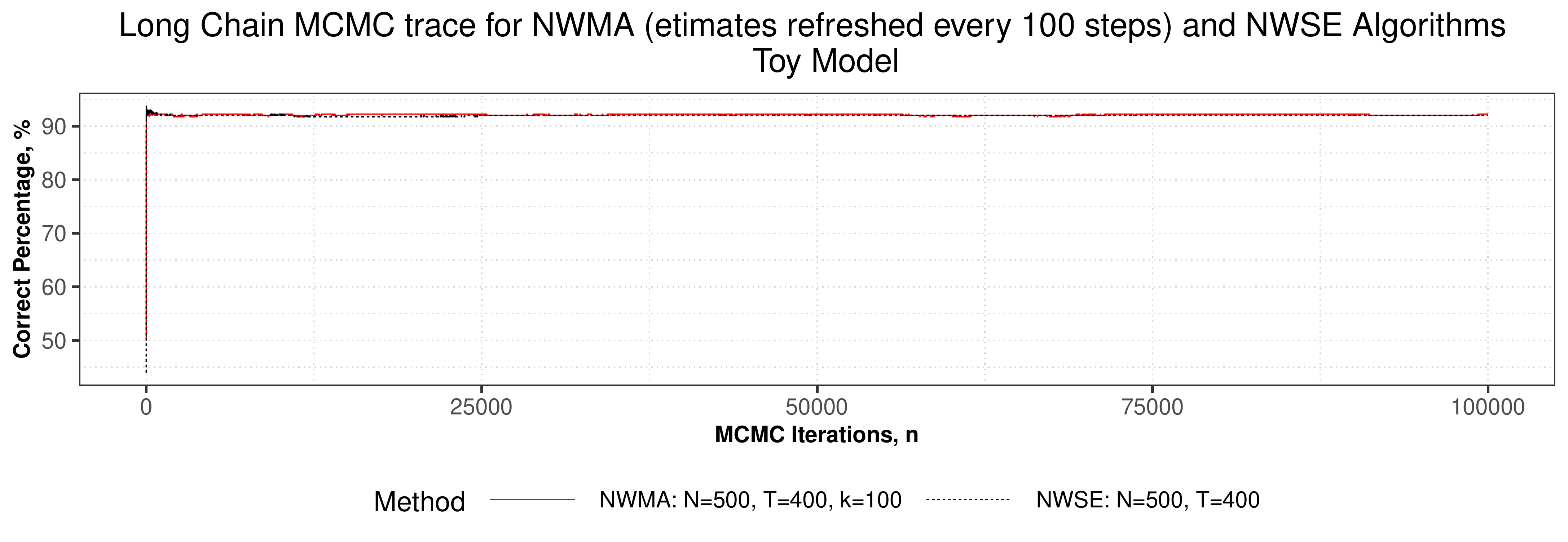} \centering
	\caption{MCMC traces of the percentage of correctly selected nodes, for the toy model. The NWSE and NWMA (refreshing marginal estimates at every $\kappa=100$ graphical iterations). The SMC sampler with $N=400, T=500$ was used as the marginal likelihood estimator for both methods.}
  \label{graph:long_run_toy}
\end{figure}


\begin{figure}[htbp!]
	\includegraphics[width = \textwidth]{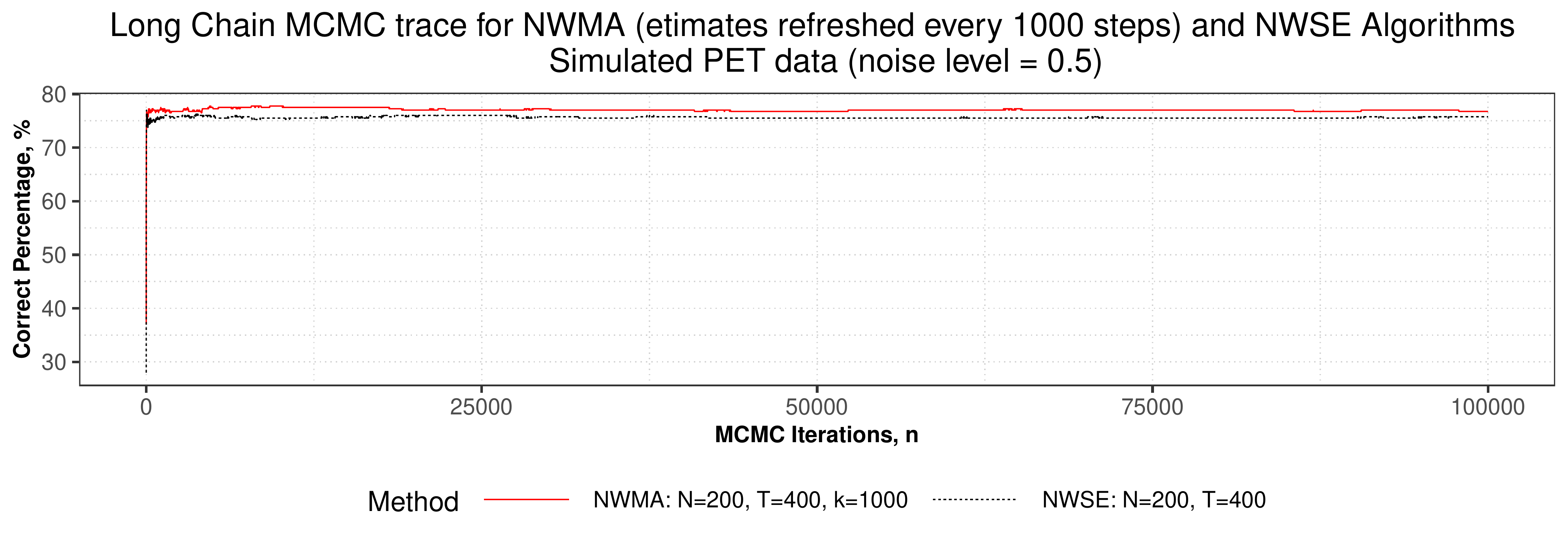} \centering
	\caption{MCMC traces of the percentage of correctly selected nodes, for simulated 2-D PET image, using the NWMA (refreshing marginal estimates at every $\kappa=1000$ graphical iterations) and NWSE methods. For both methods, the SMC sampler with $N=200, T=400$ was used.}
  \label{graph:long_run_PET}
\end{figure}

\begin{figure}[htbp!]
	\includegraphics[width = \textwidth]{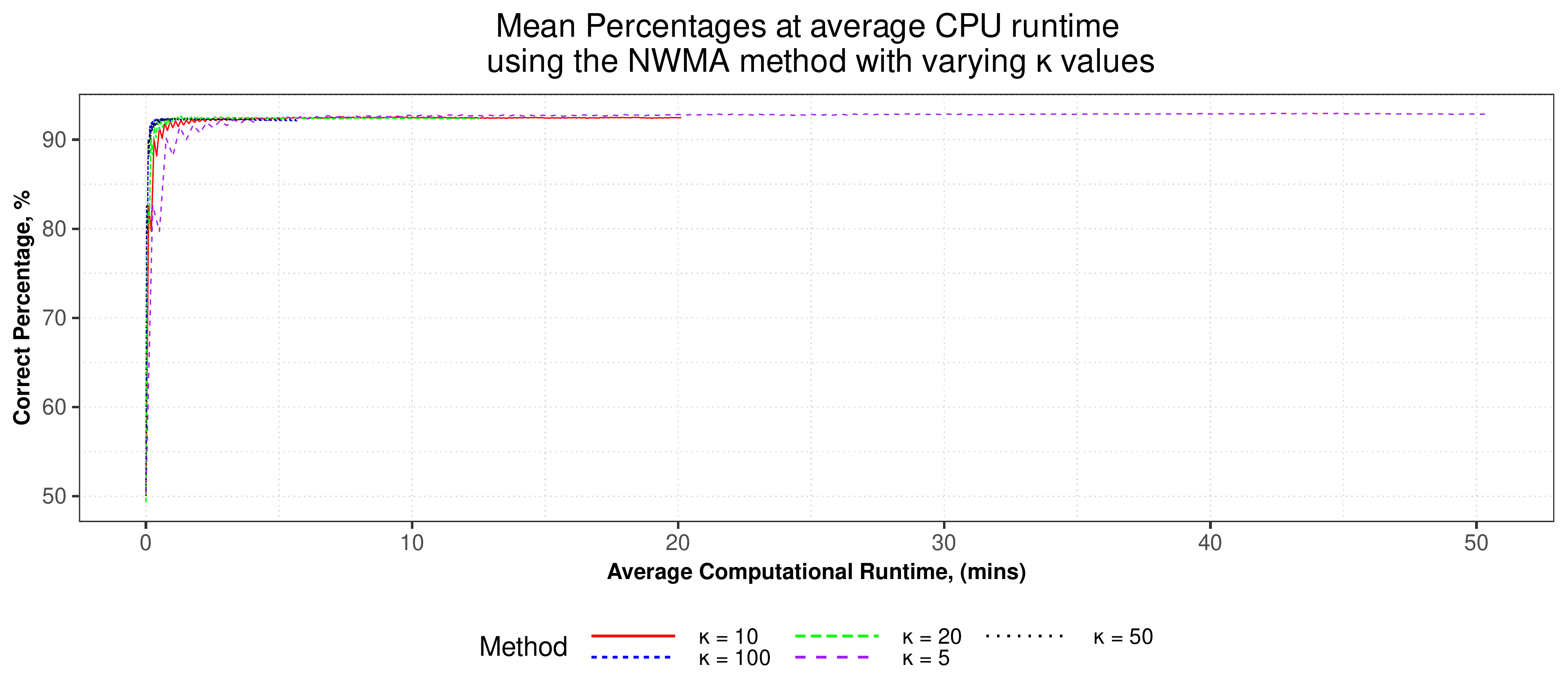} \centering
	\caption{Mean computational runtime (averaged over each graphical iteration), when analysing the Toy Model $20\times 20$ image. The NWMA method (with parameters $n= 200, N=200, T=500$) was used , with $\kappa$ varying from $\kappa=5$ and $\kappa=100$.}
  \label{graph:NWMA_kappa_comp}
\end{figure}

\begin{figure}[htbp!]
	\includegraphics[width = 0.95\textwidth]{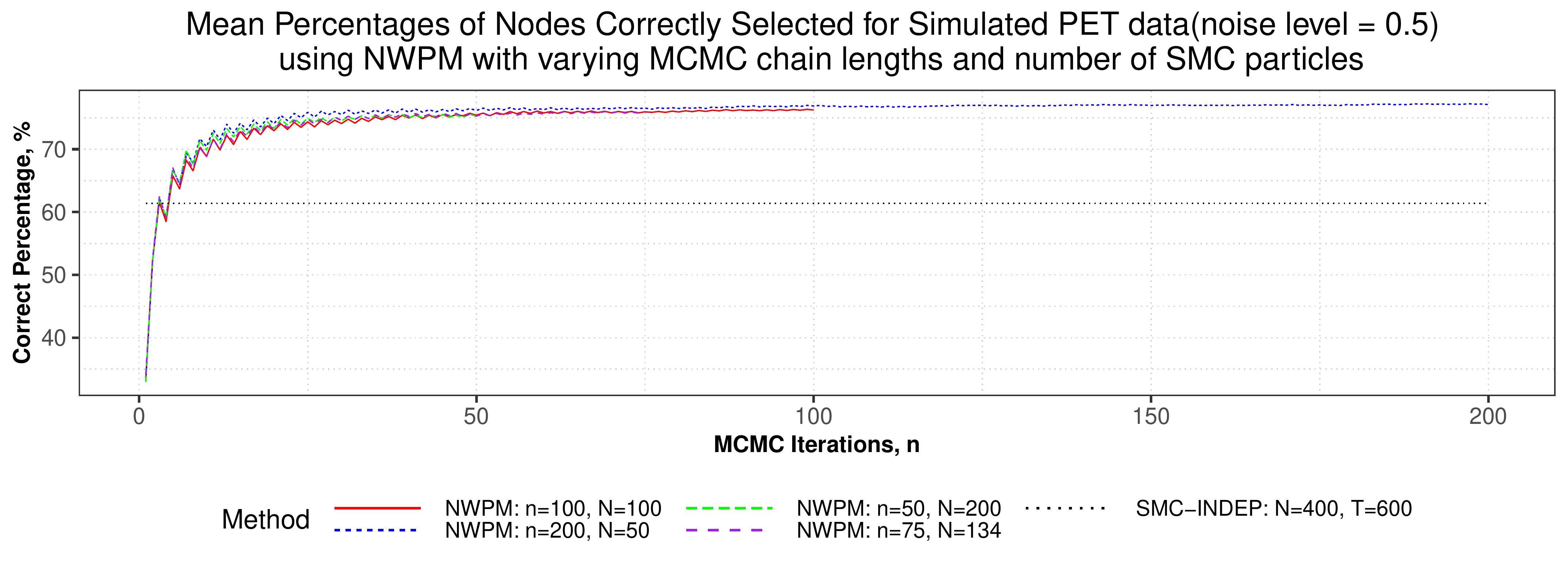} \centering
	\caption{Average percentages(\%) of the whole image where the correct model order was selected at each iteration of MH chain, using the NWPM algorithm for varying Markov Chain length, $n,$ and number of particles, $N,$ in the SMC sampler. The dashed line shows the average percentage when using spatially independent SMC($N=400,T=600$) model selection.}
  \label{graph:PETSimStudyInterRange}
\end{figure}
\clearpage

\section{Model Order outputs for Toy Model}
\begin{figure}[htbp!]
	\includegraphics[width = 0.95\textwidth]{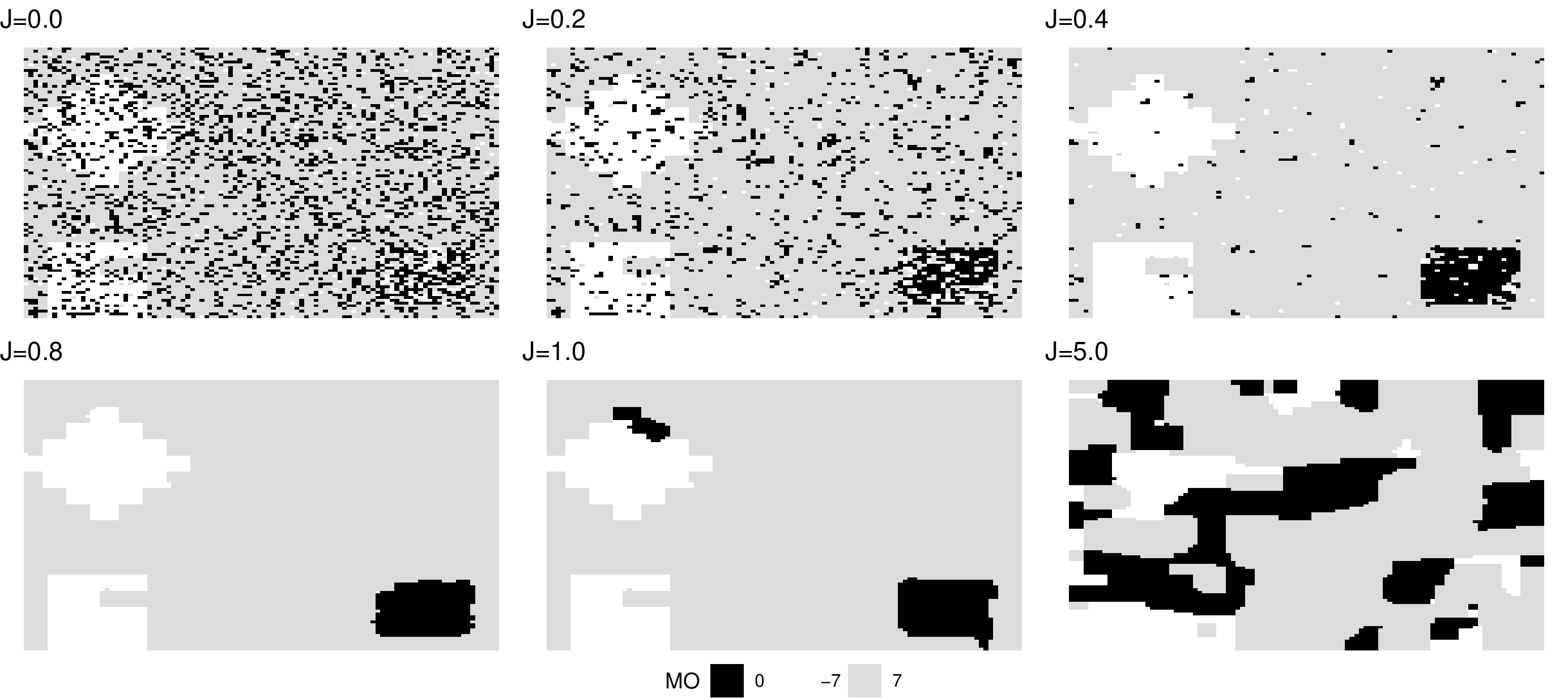} \centering
	\caption{Model order (MO) outputs for the 100$\times$100 Toy Model simulated image, for varying coupling constant values $J$.}
  \label{fig:thre_model_toy}
\end{figure}

\FloatBarrier
\end{appendices}






\end{document}